\definecolor{incol}{rgb}{0,0.6,0}
\definecolor{outcol}{rgb}{0.6,0,0}
\definecolor{azul}{RGB}{0,0,200}
\definecolor{rojo}{RGB}{205,0,0}
\definecolor{verde}{rgb}{0.3,0.6,0.3}
\newtheorem{thm}{Theorem}
\newtheorem{cor}[thm]{Corollary}
\newtheorem{prop}[thm]{Proposition}
\newtheorem{lem}[thm]{Lemma}
\newtheorem{rem}[thm]{Remark}
\newtheorem{exa}[thm]{Example}
\theoremstyle{definition}
\newtheorem{defn}{Definition}
\newcommand\definecal[1] %%define cal letters
\newcommand
\newcommand\Zset{\mathbb{Z}}
\newcommand\Nset{\mathbb{N}}
\newcommand\Qset{\mathbb{Q}}
\newcommand\llb{\llbracket} % [|
\newcommand\rrb{\rrbracket} % |]
\title{Linear functional classes over cellular automata }
\author{Ana\"el~Grandjean
 \institute{\'Ecole normale sup\'erieure de Lyon \\
15 parvis Ren\'e Descartes - BP 7000, 69\,342 Lyon Cedex 07, FRANCE}\\
\institute{GREYC, Universit\'{e} de Caen Basse-Normandie, CNRS, \\
Campus C\^{o}te de Nacre, Boulevard du Mar\'{e}chal Juin, 14\,000 Caen, France}
\email{anael.grandjean@ens-lyon.fr}
\and
Ga\'etan~Richard \qquad \qquad V\' eronique~Terrier
\institute{GREYC,  Universit\'{e} de Caen Basse-Normandie, CNRS, \\
Campus C\^{o}te de Nacre, Boulevard du Mar\'{e}chal Juin, 14\,000 Caen, France}
\email{\qquad gaetan.richard@unicaen.fr \qquad \qquad veronique.terrier@unicaen.fr}
}
\begin{document}
\maketitle

\begin{abstract}
  Cellular automata are a discrete dynamical system which models
  massively parallel computation.  Much attention is devoted to
  computations with small time complexity for which the parallelism
  may provide further possibilities. In this paper, we investigate the
  ability of cellular automata related to functional computation.  We
  introduce several functional classes of low time complexity which
  contain ``natural'' problems.  We examine their inclusion
  relationships and emphasize that several questions arising from this
  functional framework are related to current ones coming from the
  recognition context.  We also provide a negative result which
  explicits limits on the information transmission whose consequences
  go beyond the functional point of view.
\end{abstract}
% \begin{keyword}
%   Cellular automata \sep functional computation \sep real time and
%   linear time complexities\sep lower bounds
% \end{keyword}

\section*{Introduction}

Introduced in the forties to model
self-replication~\cite{VonNeumann:1966}, cellular automata are a
discrete dynamical model composed of an infinite line of cells endowed
with a state chosen among a finite alphabet. Dynamic is achieved
in discrete time by applying uniformly and synchronously a local rule
to each cell.

From the dynamical point of view, this system has been widely used to
model phenomena issued from different fields of research. It is often
cited as a representative of complex systems --- systems that can
exhibit a complex behavior even starting from simple rules. Thanks to
its simple formal definition, many results have also been achieved
over its dynamics (see~\cite{Kari:2005}).

This model has also been studied \emph{per se} as a theoretical model
of massively parallel computation. For this purpose, one usually gives
as input a \emph{finite} word over the line of cells and waits until
some predefined state occurs in the evolution to determine whether
this word is accepted or not. The number of steps needed is considered
as the \emph{time}.  Among all possible complexity classes, the mainly
studied and interesting ones are real time --- the minimal time needed
to take into account the whole information ---
and linear time~\cite{Smith:1971}. % These problems have been shown to
% be very difficult and many questions remain open today. 
Central questions concerning their computational power and their
limits remain unanswered.

To tackle such problems, a possible way
% To approach those problems, one possible path 
is to extend the study
of the model to the functional point of view. % One attempt 
A first significant step was made
by M.~Kutrib and A.~Malcher who have investigated iterative arrays (a variant of cellular automata) as transducers~\cite{Kutrib:2010}
and reported % which has lead to
several interesting results. 
For the device they considered, both input and output modes are sequential. 
In this
paper, we study the case where input and output are fed and retrieved in parallel and examine the corresponding small complexities classes.
% try to define a new functional variant which is
% ``natural'' with respect to cellular automata and try to study the
% classes obtained.

% Kutrib ............
% Surprisingly, the case where decision is replaced by function as been
% rarely studied from the complexity classes point of view. In this
% paper, we try to remedy to this lack. 
% ......................

After specifying different possible definitions of functional classes
in Section~\ref{sec:def}, we give several meaningful examples in
Section~\ref{sec:prop} along with some generic framework to build such
algorithms.  We present some closure properties, a linear acceleration
algorithm, the basic relationships between classes and some links with
classical questions on CA recognition ability in
Section~\ref{sec:cons}. In Section~\ref{sec:hier}, we prove separation
results over the classes by providing one specific impossibility of
behavior for cellular automata.  The latter result
(Theorem~\ref{thm:impo}) is of interest in itself and opens new
perspectives to achieve negative results over cellular automata.

\section{Definitions}\label{sec:def}

% \subsection{Cellular automata and computation}

Basically, a cellular automata (CA) is a one-dimensional array of finite automata (the cells) indexed by~$\Zset$. The cells range over a finite set $S$, \emph{the set of states}, and evolve synchronously in discrete time steps. At each step, the state of each cell changes according to its own state and the states of its nearest neighbors. All cells have the same \emph{local transition rule} $f$.
Formally denoting $(c,t)$ the cell $c$ at time $t$ and  $\langle c,t\rangle$ its state, we have: 
$\langle c,t+1\rangle=f(\langle c-1,t\rangle,\langle c,t\rangle,\langle c+1,t\rangle )$.

A \emph{configuration} is the sequence of cell states at a given time.
To represent the evolution of a cellular automaton starting from a
given configuration, one convenient representation is a
\emph{space-time diagram} which consists in piling-up the
configurations at successive time steps.

Viewed as a computational model, CA operate on finite words.  Although
different alternatives may be relevant, we make the following choices
for the rest of the paper.  First, we only consider parallel input
mode. That means the input sequence $w$ is supplied at initial time to
the array: $\langle i,0\rangle = w_i$ for $0 \leq i < |w|$.  Second,
we will assume that the computation is linearly bounded in space: only
a fixed number of cells, equal to the length of the input, are active.
In practice, when the input length is $n$, the cells not in range
$\{0,\cdots,n-1\}$ will remain in a persistent state $\#$ during all
the computation.  Actually, this bound coincides with the space
consumed by small time computations, i.e., those computations which
attract our attention.

We also need to specify the output mode.  Obviously, it depends on
whether we are looking at recognition or functional computation.  With
its output of yes/no type, the recognition case is the simplest one.
Before examining how the output could be retrieved in the functional
case, we first recall the definitions related to recognition.

To use cellular automata as a recognizer, two subsets of the states
set $S$ are specified: the set of accepting states and the set of
rejecting states.  The cell indexed by $0$ is chosen to be the output
cell which determines the acceptance.  So an input word $w \in S^*$ is
said to be accepted (resp. rejected) in time $t\in\Nset$ if the
cell~$0$ enters an accepting state (resp. a rejecting state) at
time~$t$; and for all time less than $t$, the output cell is neither
in an accepting nor in a rejecting state. The language
\emph{recognized} by the automaton is the set of words it eventually
accepts.  A CA works within time $\tau : \Nset\to\Nset$ if every
word~$w$ is accepted or rejected in time $\tau(|w|)$.  Among the time
complexities, the small ones --- real time and linear time --- are of
major of interest.

\begin{defn}
  A language $L \subset S^*$ is recognized in \emph{real time}
  (resp. \emph{linear time}) if it corresponds to the set of words $w$
  recognized by a cellular automaton in time $|w|$ (resp. $k |w|$ for
  some $k \in \Qset$ with $k>1$).
\end{defn}

% One can note that this definition is made so that the end symbol of
% the input word has time to reach the cell $0$ when the answer must be
% given. One variation of this definition, consist in answering in time
% $|w|-1$ instead of $|w|$. In this case, the cell $0$ has knowledge of
% the whole word but do not know that it is finished. In fact, those two
% variations lead to the same notion\todo{ref}{} since the cellular
% automaton can be transformed to do both the usual computation and the
% computation it would done if the next symbol was the end symbol. 
 
Since they have been introduced by A.~R.~Smith \cite{Smith:1971}, a
lot of work has been done to study these complexity classes in order
to better understand parallel feature. Despite a number of interesting
results (see \cite{Terrier:2011} for a survey), the basic question
whether real time and linear time classes differ or not, is still
open.

One can also take as alternative definition the case where the output
cell is located in the middle of the input word.  The minimal time for
the output cell to know the whole input, is reduced to half of the
input length.  In fact, this notion corresponds to real time CA
restricted to one-way communication which is known to be strictly less
powerful than real time CA with two-way communication.\\ % \todo{ref}

%\subsection{Functional classes}. 

% Rather than just outputting some specific state, our goal is to
% have some output. ?????????????? 

% Since we have chosen a parallel mode for input, we
% shall also require a parallel mode for output. INDEPENDANT 

Let us come now to the topic of our paper: the functional issue.
% In fact, recognizing with cellular automata implies to discard at the
% end all cells but the one indexed by $0$. In this paper, we shall take
% into account all the cells shifting from recognition to function
% computation. The main interest of this approach is that, using
% parallel output mode, the computation of every bits of the output are
% not independent and provides additional constraints.
When trying to formally define functional variant, one problem arises. For
the real time complexity, the minimal time to obtain the whole information on the
word differs according to the position of the cell inside the
word. This gives birth to two variations over real functional time
according whether we require the output to be synchronized. The
resulting classes depicted in Figure~\ref{fig:frt} are defined as
follows:

\begin{defn}
  A function $\phi: I^* \to O^*$ is called \emph{computable in strict
    real time} if there exists a cellular automaton $(S,f)$ and a
  projection $\pi : S \to O$ such that, on any input word $w\in I^*$,
  we have for any $i \in \llb 0,\frac{|w|-1}{2} \rrb$, $\pi(\langle
  i,|w|-i \rangle) = \phi(w)_i$ and for any $i \in \llb
  \frac{|w|-1}{2},|w|-1 \rrb$, $\pi(\langle i,i+1 \rangle) =
  \phi(w)_i$.
\end{defn}

\begin{defn}
  A function $\phi: I^* \to O^*$ is said \emph{computable in
    synchronous real time} if there exists a cellular automaton
  $(S,f)$ and a projection $\pi : S \to O$ such that, on any input
  word $w\in I^*$, we have $\pi(\langle i,|w| \rangle) = \phi(w)_i$
  for any $i \in \llb 0,|w|-1 \rrb$ .
\end{defn}

\begin{figure}[!htp]
  \centering
  \begin{tabular}{c     c     c}
    \begin{tikzpicture}[scale=.7]
      % strict real time
\draw (-1,0) rectangle (0,1);\draw (-0.5,0.5) node {$\sharp$};
\draw (3,0) rectangle (4,1);\draw (3.5,0.5) node {$\sharp$};
\draw (0,0) rectangle (3,1);\draw (1.5,0.5) node {$\omega$};
\draw (0,1) -- (1.5,2.5);\draw (3,1) -- (1.5,2.5);
\draw [<->][very thick] (-0.5,1) -- node [left = 1pt]{$| \omega |$} (-0.5,4);
\draw [blue,thick] (0,4) -- (1.5,2.5) -- (3,4);
\draw[blue,thick] (0,5) -- (1.5,3.5) -- (3,5);
\draw (1.5,3.2) node{$f(\omega)$};
\draw (-1,4) rectangle (0,5);\draw (-0.5,4.5) node {$\sharp$};
\draw (3,4) rectangle (4,5);\draw (3.5,4.5) node {$\sharp$};
    \end{tikzpicture} &
   \begin{tikzpicture}[scale=.7]
      % Synchronous real time
\draw (-1,0) rectangle (0,1);\draw (-0.5,0.5) node {$\sharp$};
\draw (3,0) rectangle (4,1);\draw (3.5,0.5) node {$\sharp$};
\draw (0,0) rectangle (3,1);\draw (1.5,0.5) node {$\omega$};
\draw[blue,thick] (0,4) rectangle (3,5);\draw (1.5,4.5) node {$f(\omega )$};
\draw (0,1) -- (3,4);\draw (3,1) -- (0,4);
\draw [<->][very thick] (-0.5,1) -- node [left = 1pt]{$| \omega |$} (-0.5,4);
\draw (-1,4) rectangle (0,5);\draw (-0.5,4.5) node {$\sharp$};
\draw (3,4) rectangle (4,5);\draw (3.5,4.5) node {$\sharp$};
    \end{tikzpicture} &
   \begin{tikzpicture}[scale=.7]
      % linear time recognition
\draw (-1,0) rectangle (0,1);\draw (-0.5,0.5) node {$\sharp$};
\draw (3,0) rectangle (4,1);\draw (3.5,0.5) node {$\sharp$};
\draw (0,0) rectangle (3,1);\draw (1.5,0.5) node {$\omega$};
\draw[blue,thick] (0,5) rectangle (3,6);\draw (1.5,5.5) node {$f(\omega )$};
\draw (0,1) -- (3,4);\draw (3,1) -- (0,4);
\draw [<->][very thick] (-0.5,1) -- node [left = 1pt]{$k |\omega|$} (-0.5,5);
\draw (-1,5) rectangle (0,6);\draw (-0.5,5.5) node {$\sharp$};
\draw (3,5) rectangle (4,6);\draw (3.5,5.5) node {$\sharp$};
    \end{tikzpicture} \\
    (a) Strict real time &
    (b) Synchronous real time &
    (c) Linear time \\
  \end{tabular}
 \caption{Functional classes over cellular automata}
  \label{fig:frt}
\end{figure}
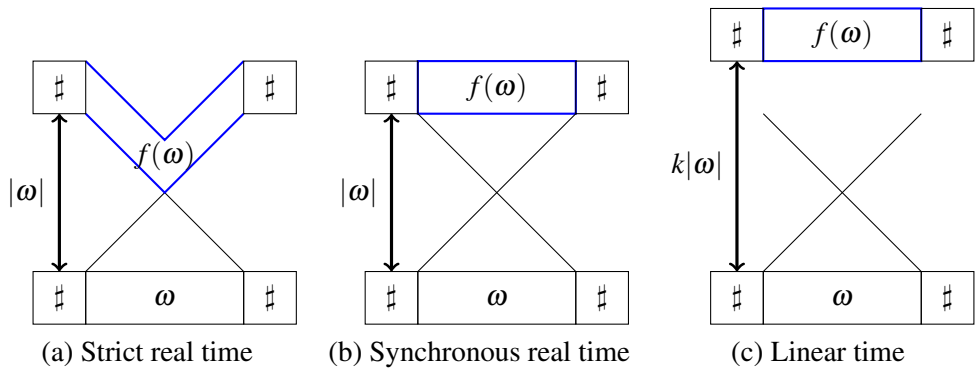

We will also consider linear time complexity (in this case, the two
output variants are equivalent as proved in
proposition~\ref{lem:lin_equiv}).

\begin{defn}
  A function $\phi: I^* \to O^*$ is said \emph{computable in linear
    time} if there exists $k \in \Qset$ with $k>1$, a cellular
  automaton $(S,f)$ and a projection $\pi : S \to O$ such that, on any
  input word $w\in I^*$, $\pi(\langle i,k|w| \rangle) = \phi(w)_i$ for
  any $i \in \llb 0,|w|-1 \rrb$ .
\end{defn}

% For this case, the notion is a lot more robust and extending the
% notion of strict real time would lead to the same notion as it will be
% proved in lemma~\ref{lem:lin_equiv}.

In the above definitions, we refer to sets of sites on which the
outputs are displayed.  In each case, the sequence of sites may be
distinguished either, for strict real time, in making use of two
signals initiated from each input extremity or, for synchronous time,
by the way of \emph{Firing Squad Synchronization} solutions
(see~\cite{Culik:1989a,Mazoyer:1996}). We also observe that, for both
the real time complexities, it is possible to answer one step sooner
by anticipating what would happen when the end of the input is
reached. But the drawback is that we lose the capacity to explicit the
set of output sites by marking them.

In addition, one can note that linear time is not affected by changing
the input or output mode from parallel to sequential whereas this is
not the case for (both) real time.

% Now, let us discuss a little more our choice for definitions with
% respect to input and output. Using the cellular automata framework,
% there is also another possibility to enforce input or output: rather
% than taking them simultaneously, we take them sequentially. For the
% output, the result is read on the sequence of states occurring at
% position $0$ (or at position $|w|$).  For the input, the sequence of
% states at position $0$ is forced to the letters of the input. In this
% variant, the model is very similar to transducer of iterative arrays
% \cite{Kutrib:2010}. For the real time notions, those variation would
% give different results. However, it is quite easy to see that
% converting a sequential input to a synchronous one or reciprocally
% take only a linear time (with the use of firing squad to synchronize
% the input). Thus the notion of linear time is the same for all
% possible variations of the definition. At last, it should also be
% mentioned that there also exists another variant in the case of
% sequential input and output. In this case, one bit of the result is
% expected to be output after each bit of input. Those algorithms are
% called \emph{on line}. \todo{ref}

\section{Examples}\label{sec:prop}

To start the study, we shall give several meaningful examples
illustrating the range of functional classes and present some
interesting (generic) algorithms.

One first easy remark is that functional classes are a generalization
of detection.

\begin{lem}\label{lemRec2Fnl}
  If $L \subset I^*$ is a language recognizable in real (resp. linear)
  time then the function $f: I^* \to \{1,0\}^*$ which is defined by
  $f(w) = 10^{|w| - 1}$ if $w \in L$ and $f(w) = 0 0^{|w|-1}$
  otherwise is computable in strict real time (resp. linear time).
\end{lem}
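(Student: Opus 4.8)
The plan is to reuse the given recognizer almost verbatim, after observing that the target function is trivial away from its first coordinate. Recall that $f(w)_i = 0$ for every $i \in \llb 1,|w|-1\rrb$, while $f(w)_0 = 1$ exactly when $w \in L$. So the only non-constant output occurs at position $0$, whose strict-real-time output site is cell $0$ at time $|w|-0 = |w|$ --- which is precisely the site at which a real-time recognizer announces whether $w \in L$. This coincidence is what makes the two output modes interchangeable here without any signal machinery for marking the output wedge.

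First I would take a real-time recognizer $(S',f')$ for $L$, with accepting state set $A \subseteq S'$, and build the new automaton $(S,f)$ by keeping the recognizer computation as one component of the state and adding a single Boolean marker flagging the leftmost active cell. This marker is maintained purely locally: cell $0$ is the unique cell whose left neighbour stays in the persistent state $\#$, so from the first step on it can detect this and carry a ``leftmost'' flag, whereas no cell $i \in \llb 1,|w|-1\rrb$ ever sets it.

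Then I would define the projection $\pi : S \to \{0,1\}$ to return $1$ on exactly those states that carry the leftmost flag and whose recognizer component lies in $A$, and $0$ on every other state. Verifying the definition of strict real time is then immediate. At position $0$ the output site is $(0,|w|)$, where cell $0$ is flagged leftmost and its recognizer component is accepting iff $w \in L$, so $\pi$ reads off $f(w)_0$. At every position $i \in \llb 1,|w|-1\rrb$, whatever the output site --- $(i,|w|-i)$ in the left half or $(i,i+1)$ in the right half --- and whatever the recognizer happens to be computing there, the cell is not flagged leftmost, so $\pi$ returns $0 = f(w)_i$. The linear-time case runs on the same automaton fed with a linear-time recognizer: if that recognizer decides at cell $0$ at time $k|w|$, then the single synchronous output time $k|w|$ of the functional linear-time definition coincides with the decision time, cell $0$ again projects to the value $1$ if $w\in L$ and $0$ otherwise, and all other cells project to $0$.

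I do not expect a genuine obstacle; the one point needing care --- and the nearest thing to a difficulty --- is to ensure that the recognizer's verdict is present at cell $0$ at exactly the prescribed time $|w|$ (resp.\ $k|w|$) and that no accepting state appearing elsewhere in the diagram leaks into the output. Both issues are settled by the leftmost marker, which confines the reading of $A$ to the single cell whose output site carries information, so that the remaining $|w|-1$ positions are forced to $0$ by construction rather than by any timing argument.
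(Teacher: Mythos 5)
Your proposal is correct and follows essentially the same route as the paper, whose entire proof is to take the recognition automaton and send every output site except the first to $0$; your leftmost-cell marker and projection are just an explicit implementation of that one-line idea, hinging on the same observation that the output site for position $0$ coincides with the recognizer's decision site $(0,|w|)$ (resp.\ $(0,k|w|)$).
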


\begin{proof}
  It is sufficient to take the recognition automaton and  send all output sites to $0$
  but the first one.
\end{proof}

Let us first look at some simple examples which still use the power of
parallelism.

\begin{exa}
  The mirror is computable in synchronous real time.
\end{exa}

\begin{proof}
  Two possibilities are depicted in Figure~\ref{fig:algo-mirror}.  The
  left algorithm has been exhibited by M.~Kutrib
  in~\cite{Kutrib:2008}. The right one is a variant with symmetric
  features.
  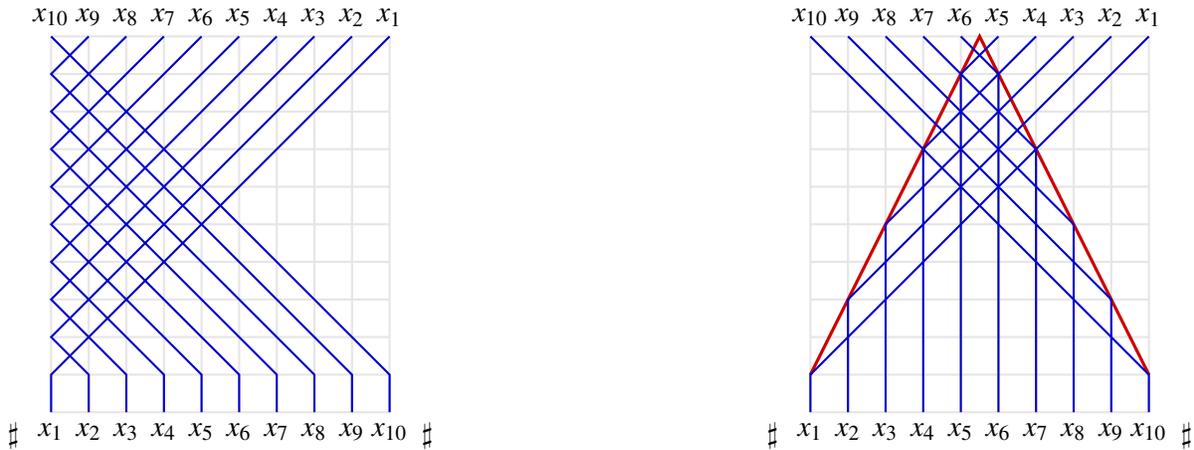
\begin{figure}[!htp]
    \centering
          \begin{tikzpicture}[thick,scale=.5]
% reverse Kutrib 
\def\nb{10}
\def\mil{5}
\node [below] (bg) at (0,0)  {$\sharp$};
\node [below] (bd) at (\nb+1,0)  {$\sharp$};
\draw [black!10] (1,0) grid (\nb,\nb);

\foreach \c in {1,...,\nb} {
\coordinate [label=below:$x_{\c}$] (x\c) at (\c,0);
\coordinate [label=above:$x_{\c}$] (y\c) at (\nb+1-\c,\nb);
\coordinate (i\c) at (1,\c);
}

\foreach \c in {1,...,\nb}
\draw [azul] (x\c)--++(0,1)--(i\c)--(y\c);
\end{tikzpicture}
          \hfill
          \begin{tikzpicture}[thick,scale=.5]
% variante reverse
\def\nb{10}
\def\mil{5}
\node [below] (bg) at (0,0)  {$\sharp$};
\node [below] (bd) at (\nb+1,0)  {$\sharp$};
\coordinate(bg1) at (1,1);
\coordinate (bd1) at (\nb,1);
\coordinate (bm) at (\nb/2+.5,\nb);

\draw [black!10] (1,0) grid (\nb,\nb);

\foreach \c in {1,...,\nb} {
\coordinate[label=below:$x_{\c}$] (x\c) at (\c,0) ;
\coordinate [label=above:$x_{\c}$] (y\c) at (\nb+1-\c,\nb) ;
}
\foreach \c in {\mil,...,\nb} \coordinate (i\c) at (\c,2*\nb-2*\c+1);
\foreach \c in {1,...,\mil} \coordinate (i\c) at (\c,2*\c-1) ;

\draw [rojo,very thick] (bg1)--(bm)--(bd1);

\foreach \c in {1,...,\nb}
\draw [azul] (x\c)--(i\c)--(y\c);
\end{tikzpicture}
    \caption{Computing the mirror in synchronous real time}
    \label{fig:algo-mirror}
  \end{figure}
\end{proof}

\begin{exa}\label{ex:fg}
  The functions 
  $f(u_1\cdots u_n)= u_{n/2+1}\cdots u_nu_1\cdots u_{n/2}$ and
  $g(u_1\cdots u_n)= u_{n/2}\cdots u_1u_n\cdots u_{n/2}$
  are computable in strict real-time.
\end{exa}
\begin{proof}
  The flow of data through the space time-diagram is depicted for the
  two functions $f$ and $g$ in Figure~\ref{fig:uv2vu-uv2urvr}.
  For the construction of $g$, the space can be reduced to the bounded space of $n$ cells in folding the figure.\\
  Note that the composition of these two functions is the mirror
  function. 
\end{proof}
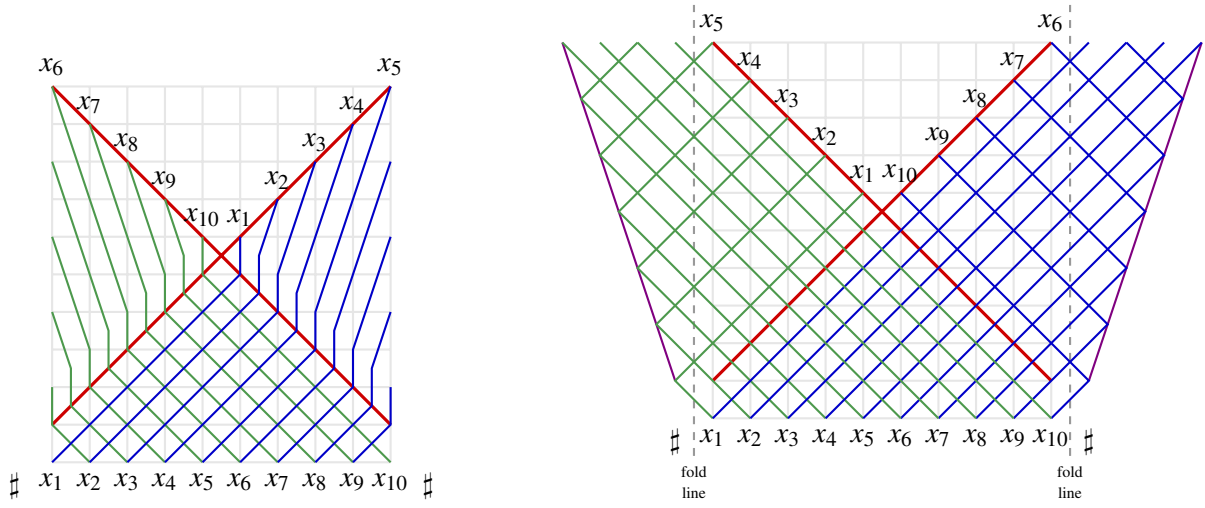
\begin{figure}[!htp]
    \centering
    \begin{tikzpicture}[thick,scale=.5]
% uv -> vu
\def\nb{10} %pair
\def\mil{5}
\def\mild{6}
\def\nbg{9}

\node [below] (bg) at (0,0)  {$\sharp$};
\node [below] (bd) at (\nb+1,0)  {$\sharp$};

\coordinate(bg1) at (1,1);
\coordinate (bd1) at (\nb,1);
\coordinate(hg1) at (1,\nb);
\coordinate (hd1) at (\nb,\nb);

\draw [black!10] (1,0) grid (\nb,\nb);
\draw [rojo,very thick] (bg1)--(hd1) (bd1)--(hg1);

\foreach \c in {1,...,\nb} {
\coordinate[label=below:$x_{\c}$] (x\c) at (\c,0);
\coordinate (d\c) at (\c*.5,\c*.5) ;
\coordinate (e\c) at (\mil+\c*.5+.5,\mil-\c*.5+.5) ;
}

\foreach \c in {\mild,...,\nb} {
\coordinate [label=above:$x_{\c}$] (y\c) at (\c-\mil,\nb+\mil-\c+1) ;
\coordinate (z\c) at (\nb,2*\nb-2*\c) ;
}
\foreach \c in {1,...,\mil} {
\coordinate  (y\c) at (1,2*\c-2) ;
\coordinate [label=above:$x_{\c}$] (z\c) at (\c+\mil,\c+\mil) ;
}

\foreach \c in {2,...,\nb} \draw [verde] (x\c)--(d\c)--++(0,1)--(y\c);
\foreach \c in {1,...,\nbg} \draw [azul] (x\c)--(e\c)--++(0,1)--(z\c);
\end{tikzpicture}
    \hfill\begin{tikzpicture}[thick,scale=.5,text centered]
% uv -> u^r v^r
\def\nb{10} %pair
\def\mil{5}
\def\mild{6}
\def\nbg{9}

\node [below] (bg) at (0,0)  {$\sharp$};
\node [below] (bd) at (\nb+1,0)  {$\sharp$};
\coordinate(bg1) at (1,1);
\coordinate (bd1) at (\nb,1);
\coordinate(hg1) at (1,\nb);
\coordinate (hd1) at (\nb,\nb);

\coordinate (p0) at (0,1);
\coordinate (p1) at (-\nb/3+1/3,\nb);
\coordinate (q0) at (\nb+1,1);
\coordinate (q1) at (\nb+\nb/3+2/3,\nb);

\draw [black!10] (1,0) grid (\nb,\nb);
\draw [rojo,very thick] (bg1)--(hd1) (bd1)--(hg1);
\draw[violet]  (p0) -- (p1) (q0) -- (q1);

\foreach \c in {1,...,\nb} {
\coordinate[label=below:$x_{\c}$] (x\c) at (\c,0);
}

\foreach \c in {\mild,...,\nb} {
\coordinate  (y\c) at (-\c*.5+.5,\c*1.5-.5) ;
\coordinate [label=above:$x_{\c}$](z\c) at (\nb+\mild-\c,\nb+\mild-\c) ;
\coordinate (e\c) at (\nb*1.5+1-\c*.5,\nb*1.5-\c*1.5+1) ;
}
\foreach \c in {1,...,\mil} {
\coordinate  [label=above:$x_{\c}$](y\c) at (\mild-\c,\mil+\c) ;
\coordinate  (z\c) at (\c+\mil,\c+\mil) ;
\coordinate (d\c) at (-\c*.5+.5,\c*1.5-.5) ;
}

\draw[dashed,gray!80] (.5,-1) -- (.5,\nb+1) (\nb+.5,-1) -- (\nb+.5,\nb+1);
\node[below,text width=6mm] at (.5,-1) {\tiny fold\\[-2mm]line};
\node[below,text width=6mm] at (\nb+.5,-1) {\tiny fold\\[-2mm]line};

\foreach \c in {1,...,\mil} \draw [verde] (x\c)--(d\c)--(y\c);
\foreach \c in {\mild,...,\nb} \draw [azul] (x\c)--(e\c)--(z\c);

\draw [azul] (x5)--(13.5,8.5)--(12,10);
\foreach \c in {1,...,4} \draw [azul] (x\c)--(\nb+\c,\nb);
\draw [verde] (x6)--(-2.5,8.5)--(-1,10);
\foreach \c in {7,...,10}\draw [verde]  (x\c)--(\c-\nb,\nb);
\end{tikzpicture}
    \caption{Computing $f$ and $g$ in strict real time}
    \label{fig:uv2vu-uv2urvr}
  \end{figure}

\begin{exa}
The function 
  $h(\flat^nu_1\cdots u_n)= u_n\cdots u_1\flat^n$
  is computable in strict real-time.
\end{exa}

\begin{proof}
  See Figure~\ref{fig:algomiddle}. Knowing the middle of the word at time $1$ allows to set up a signal that serves as a reflector.  
\end{proof}

\begin{figure}[!htp]
  \centering
  \begin{tikzpicture}[thick,scale=.5,text centered]
% #^|v|v -> v^r#^|v| the middle is known
\def\nb{10} %pair
\def\mil{5}
\def\mild{6}
\def\nbg{9}

\draw [help lines] (1,0) grid (\nb,\nb);
\node [below] (bg) at (0,0)  {$\sharp$};
\node [below] (bd) at (\nb+1,0)  {$\sharp$};

\draw [rojo,very thick] (1,1) -- (\nb,\nb) (\nb,1) -- (1,\nb);
\draw[line width=1mm,violet]  (\mil,1) -- (\mil/2,1+\mil/2) -- (.5,\nb-.5);

\foreach \c in {1,...,\mil} \coordinate[label=below:$\flat$] (x\c) at (\c,0);

\foreach \c in {\mild,...,\nb} {
\coordinate[label=below:$x_{\c}$] (x\c) at (\c,0);
\coordinate  (y\c) at (\mil-\c*.5+.5,-\mil+3*\c/2-.5) ;
\coordinate [label=above:$x_{\c}$](z\c) at (1+\nb-\c,\c);
}

\foreach \c in {\mild,...,\nb}\draw [verde]  (x\c) -- (y\c) -- (z\c);
\end{tikzpicture}
  \caption{Computing $h$ in strict real time}
  \label{fig:algomiddle}
\end{figure}
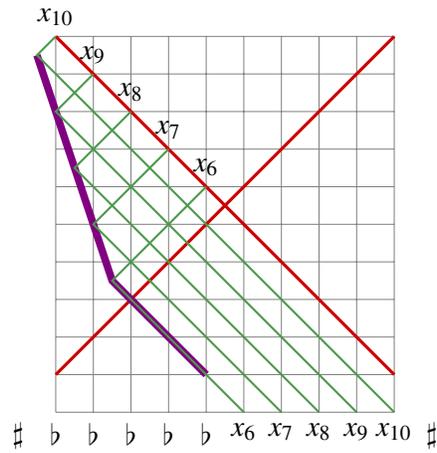

In a recent result, T.~Worsch and H.~Nishio have proved that sorting
binary numbers of the same size is computable in synchronous real time
\cite{Worsch:2010}.  The algorithm is based on an odd-even sort and
uses some clever adaptations to achieve synchronous real time.  Here,
we shall present a new algorithm to sort in linear time.  Its interest
lies in the ``generic'' method used which applies to several different
problems. The basic idea is to build a \emph{assembly line}: %and thenuse it: 
the input will be traveling along some path where
\emph{agents} will act on it.  We shall give three significant
examples taking advantage of this method: sorting a sequence,
reordering a cycle in a graph, and marking the connected components of
an undirected graph.

\begin{lem}
  Sorting a sequence can be done in linear time.
\end{lem}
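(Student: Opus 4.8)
The plan is to regard the line of active cells as the assembly line itself and to let each cell play the part of a comparison \emph{agent}: the keys travel toward their destinations through repeated local compare--exchanges, which is an odd--even transposition sort performed in place. Fix a total order on the alphabet $S$, so that comparing two symbols is a finite--state operation, and let cell $i$ start with the key $a_i$. Every cell also maintains a one--bit \emph{phase} that it toggles each step; as all cells start in phase $0$ and flip together, this realises a shared clock parity with finitely many states. The rule alternates two matchings of neighbouring cells: on even steps it compares the pairs $(0,1),(2,3),\dots$ and on odd steps the pairs $(1,2),(3,4),\dots$, each compared pair sending the smaller key left and the larger right. So that a cell knows which matching it belongs to, I set up a static two--colouring by a signal emitted at time $0$ from the left boundary marker: moving rightwards at speed one and toggling a carried bit at each cell, it colours cell $j$ by the parity of $j$ and reaches the right end within $n$ steps. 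A cell begins to participate in the compare--exchanges as soon as this sweep has coloured it.

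No quantity ever has to be counted, so a fixed finite state set per cell is enough: the agents only compare and exchange symbols of $S$, and no cell stores a rank or an index, which could otherwise grow with $n$. For correctness I rely on two facts. First, an ascending configuration is a fixed point of the rule, since then no adjacent pair is out of order. Second, a compare--exchange never increases the number of inversions; hence the partial activity occurring while the colouring sweep is still in transit can only help, and once every cell is active (by time $\approx n$) the rule becomes a standard odd--even transposition, which sorts any configuration within a further $n$ rounds. Thus the array is sorted by time $\approx 2n$ and, being a fixed point, stays sorted. Taking $k=3$ gives a read--off time $k|w|=3n$ beyond this, at which cell $i$ displays the $i$--th smallest key, namely $\phi(w)_i$, with no halting or freezing signal needed.

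The main obstacle is the round bound underlying the second fact. I would establish it through the $0$--$1$ principle: it suffices to show that every monotone $0$--$1$ input is sorted within $n$ rounds, which comes down to bounding how fast a run of $1$s is driven rightwards by the alternating matchings. The remaining points --- keeping the phase bit synchronised, propagating the two--colouring from the boundary, and handling the two extreme cells whose outer neighbour is the persistent state $\#$ --- are routine signal management absorbed by the linear slack. Finally, should the elements be numbers encoded as blocks of a fixed common length rather than single symbols, the same construction applies with each compare--exchange folding over the block length; this costs only a constant factor and leaves the linear--time conclusion intact.
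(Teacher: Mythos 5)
Your proof is essentially correct, but it takes a genuinely different route from the paper. The paper's algorithm is an \emph{assembly line}: a second layer of stationary \emph{agents} is superposed on a stream of keys travelling leftward, each agent keeping the larger of the two keys it currently sees and passing the smaller one on --- in effect a pipelined insertion/selection sort, finished by shifting the result back into place. You instead sort \emph{in place} by odd--even transposition, using a phase bit plus a position-parity colouring swept in from the left boundary to define the alternating matchings, and you invoke the $0$--$1$ principle for the $n$-round bound. Both arguments are sound and both reduce the block-encoded case to the single-symbol case by the same constant-factor folding remark (yours is no more hand-wavy than the paper's on this point; just make sure the frontier pair of the colouring sweep, with one coloured and one uncoloured cell, simply abstains from exchanging). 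What the two approaches buy is different: your construction is more classical and self-contained, needs no final repositioning phase, and is in fact the route the paper attributes to Worsch and Nishio for the stronger synchronous real-time bound; the paper's assembly-line scheme is deliberately \emph{generic}, since the same agent-on-a-stream template is reused immediately afterwards for edge reordering of a path and for connected-component marking, which an in-place transposition sort does not obviously adapt to.
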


\begin{proof}[Algorithm]

  To do this, we use two layers of states. The lower layer (in black)
  will serve to transmit information whereas the upper layer (in blue)
  will stay still and serve as an agent.

  The basic scheme is the following: the lower level travels to the
  left. If it is greater than the agent (in the upper level), they
  swap place. If the upper level is empty, the lower information
  becomes the agent (see Figure~\ref{fig:algo-sort}). 

  Once the end of the word reaches an agent, it indicates the end of
  computation for this agent. The algorithm is ended by shifting the
  result (in green) back in place (not depicted in the figure).

 \begin{figure}[!htp]
    \centering
    \begin{tikzpicture}[scale=.5]
      \begin{footnotesize}
        
\draw[gray] (0,0) grid +(6,9);
\draw[gray] (-1,3) grid (0,9);
\draw[gray] (-2,5) grid (-1,9);
\draw[gray] (-3,7) grid (-2,9);

\draw (1,0) -- +(0,9);
\draw (5,0) -- +(0,9);

% \foreach \x in {0,1,...,5} 
%   \draw[help lines] (\x,0) -- +(1,1);
% \foreach \x in {0,1,...,4} 
%   \draw[help lines] (\x,1) -- +(1,1);
% \foreach \x in {0,1,...,3} 
%   \draw[help lines] (\x,2) -- +(1,1);
% \foreach \x in {-1,0,...,2} 
%   \draw[help lines] (\x,3) -- +(1,1);
% \foreach \x in {-1,0,1} 
%   \draw[help lines] (\x,4) -- +(1,1);
% \foreach \x in {-2,-1,0} 
%   \draw[help lines] (\x,5) -- +(1,1);
% \foreach \x in {-2,-1,0} 
%   \draw[help lines] (\x,6) -- +(1,1);
% \foreach \x in {-3,-2,...,0} 
%   \draw[help lines] (\x,7) -- +(1,1);
% \foreach \x in {-3,-2,...,0} 
%   \draw[help lines] (\x,8) -- +(1,1);

\node at (.66,.3) {$\#$};
\node at (1.66,.3) {$7$};
\node at (2.66,.3) {$4$};
\node at (3.66,.3) {$8$};
\node at (4.66,.3) {$1$};
\node at (5.66,.3) {$\#$};

\node at (.66,1.3) {$7$};
\node at (1.66,1.3) {$4$};
\node at (2.66,1.3) {$8$};
\node at (3.66,1.3) {$1$};
\node at (4.66,1.3) {$\#$};

\node[blue] at (.33,2.7) {$7$};
\node at (.66,2.3) {$4$};
\node at (1.66,2.3) {$8$};
\node at (2.66,2.3) {$1$};
\node at (3.66,2.3) {$\#$};

\node[blue] at (.33,3.7) {$7$};
\node at (-.33,3.3) {$4$};
\node at (.66,3.3) {$8$};
\node at (1.66,3.3) {$1$};
\node at (2.66,3.3) {$\#$};

\node[blue] at (-.66,4.7) {$4$};
\node[blue] at (.33,4.7) {$8$};
\node at (-.33,4.3) {$7$};
\node at (.66,4.3) {$1$};
\node at (1.66,4.3) {$\#$};

\node[blue] at (-.66,5.7) {$7$};
\node[green!50!black] at (.33,5.7) {$8$};
\node at (-1.33,5.3) {$4$};
\node at (-.33,5.3) {$1$};
\node at (.66,5.3) {$\#$};

\node[blue] at (-1.66,6.7) {$4$};
\node[green!50!black] at (-.66,6.7) {$7$};
\node[green!50!black] at (.5,6.5) {$8$};
\node at (-1.33,6.3) {$1$};
\node at (-.33,6.3) {$\#$};

\node[green!50!black] at (-1.66,7.7) {$4$};
\node[green!50!black] at (-.5,7.5) {$7$};
\node[green!50!black] at (.5,7.5) {$8$};
\node at (-2.33,7.3) {$1$};
\node at (-1.33,7.3) {$\#$};

\node[green!50!black] at (-1.5,8.5) {$4$};
\node[green!50!black] at (-.5,8.5) {$7$};
\node[green!50!black] at (.5,8.5) {$8$};
\node[green!50!black] at (-2.66,8.7) {$1$};
\node at (-2.33,8.3) {$\#$};
      \end{footnotesize}
    \end{tikzpicture}
    \caption{Scheme of sorting algorithm}
    \label{fig:algo-sort}
  \end{figure}

  Here, the behavior is depicted using integers to underline the
  scheme. Practically, those numbers are supposed to be encoded in
  binary with a fixed number of cells and require thus some fixed
  size. Every elementary transition of the scheme can be done in
  linear time with respect to the size of integers. Since the number of
  steps of the scheme is proportional to the number of integers the
  resulting algorithm is linear in the size of the input.

\end{proof}

The previous algorithm is not a surprising result and can be probably
presented in a different way. However, the generic idea of the method
can be adapted to the problem of \emph{edge reordering of path}: given
a sequence of edges that form a path in a random order, we want to
reconstruct the order of nodes in the path. For example, the input $
(6,12) (2,6) (1,11) (12,7) (8,1) (7,8)$ should  output
$2,6,12,7,8,1,11$. Intuitively, this problem can be seen as a
sorting problem in which the order is given by local
constraints. Using similar method as previously, the problem can be
solved in linear time.

\begin{lem}
  Edge reordering of a path can be done in linear time.
\end{lem}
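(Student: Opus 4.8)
The plan is to reuse the \emph{assembly line} technique of the two previous constructions, with the numerical comparison of the sorting routine replaced by the local incidence test ``do two edges share an endpoint''. Fix the encoding first: an input of $n$ edges is a word of length $\Theta(ns)$, where each edge is a pair of node labels written in binary over a fixed number $s$ of cells. As in the sorting lemma, every elementary operation below acts on blocks of this fixed size and so costs $O(s)$ steps; it will therefore be enough to exhibit a scheme whose number of macro-steps is proportional to $n$, exactly as argued there.

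First I would build the \emph{adjacency} of the path. On an initial streaming pass each edge is duplicated into two occurrence records $(\text{label},\text{edge})$, one per endpoint, and these $2n$ records are sorted by label using the linear-time routine of the preceding lemma. After the sort, equal labels occupy contiguous cells, so a single left-to-right local scan can, at each node, pair the (at most two) edges meeting there and flag as an \emph{endpoint} every label occurring only once. This yields, for each node, pointers to its one or two path-neighbours, i.e.\ the path presented as a doubly linked structure, and isolates the two extremities; I take the smaller of the two as the source, which matches the orientation chosen in the worked example.

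It then remains to \emph{linearise} this structure, emitting the nodes in path order onto the successive output sites. Here I would run the assembly line proper: the lower layer carries the edges (or their occurrence records) travelling left, while the upper layer is a static array of agents that will hold the output word. The frontier agent knows the current node $v$; when the unique still-unused edge incident to $v$ passes beneath it, the agent reads off the other endpoint $u$, writes it, deletes that edge from the stream, and hands the value ``frontier $=u$'' to its right neighbour, which becomes the new frontier. Threading the path this way fills the agent array, after which the answer is moved onto the output sites by the final shifting phase (the green phase of the sorting figure).

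The main obstacle is the time analysis of this last phase, not its correctness. A naive frontier that waits for a full revolution of the stream before each advance costs one $\Theta(ns)$ cycle per node, hence $\Theta(n^2 s)$ in the worst case; the underlying difficulty is that neighbours carry arbitrary labels and a one-dimensional array cannot follow a pointer to a distant cell in sublinear time, so the linearisation is really a form of \emph{list ranking}. To stay within a linear budget I would not chase pointers one hop at a time but instead propagate the order (the distance to the source) along the assembly line in a single coordinated sweep, exploiting that the structure is one simple path rather than a general linked list, so that every agent fires exactly once and the number of macro-steps remains $O(n)$, each of cost $O(s)$. Controlling this propagation so that it is simultaneously correct and strictly linear is the delicate heart of the argument and the place where I would concentrate the formal construction; the remaining ingredients (the sorting passes and the local scans) are already linear by the preceding lemma and by pipelining.
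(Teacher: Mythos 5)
Your first phase (duplicating each edge into two occurrence records, sorting them by label with the preceding lemma, and reading off adjacency) is linear and unobjectionable, but it only restates the input as a linked structure; it does not touch the actual difficulty. The lemma stands or falls on the linearisation phase, and there your proposal has a genuine gap: you correctly diagnose that frontier-chasing costs one full $\Theta(ns)$ revolution of the stream per vertex, hence $\Theta(n^2s)$ overall, and that what remains is a list-ranking problem on a one-dimensional array --- and then you assert, without a construction, that a ``single coordinated sweep'' propagating distances from the source will succeed because the structure is a simple path. That assertion \emph{is} the content of the lemma. The rank of a vertex is determined by a chain of $n$ local constraints scattered arbitrarily over the array, and nothing in the proposal explains how an agent can learn its vertex's rank before the ranks of all vertices between it and the source have been resolved; ``every agent fires exactly once'' is the desired conclusion, not an argument for it.

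The paper resolves exactly this point with different bookkeeping. During the forward streaming pass the agents hold not vertices but path \emph{fragments}, each encoded as a triple (first endpoint, last endpoint, length); an agent merges the fragment flowing past it with its own when they share an endpoint, and records only the \emph{relative} position of the glued vertex inside the newly created fragment. After one pass the whole path is reduced to a single descriptor. In the backward pass the first vertex of the path is sent back as a reference, and each agent, upon seeing its reference, places its suppressed vertex at the absolute position obtained from its stored offset and the length counters. Absolute positions are thus computed by composing local offsets in one return sweep, never by following the path vertex by vertex, and this is what makes the whole computation linear. If you want to keep your decomposition, this offset-composition mechanism for the linearisation phase is precisely the piece you still have to supply.
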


\begin{proof}[Algorithm]

  The algorithm also uses two layered states as
  previously. 

  In the first step, elements of both layer are triplets representing
  the beginning, the end and the length of one path (for example,
  $(3,5)_2$ represent the path of length $2$, going from edge $3$ to
  edge $5$). Initially, all edges correspond to path of length
  $1$. The basic operation of an agent consists in merging once the two
  paths present in the cell and storing the relative position of the
  gluing element with respect to the start of the newly created
  path. Once all information are gone through, the result consists of
  a unique element designating the path.
 
  In a second step, the first vertex of the path is send backwards.
  Each agent waits for its reference and after seeing it, can put the
  second vertex at the correct position using the length as a counter.

  The correctness of the algorithm lies in the following properties:
  during the first step, any agent is removing reference to exactly
  one vertex (and all references to this vertex since it is guaranteed
  to only appear twice in the input). The resulting flow transmitted
  to its left neighbour is a valid data. Moreover, if the flow
  returning from the left neighbour is correct, then the missing
  vertex is added at the correct position. Since the last agent exists
  and does its job correctly, a recurrence can prove the correctness
  of the algorithm.

  \begin{figure}[!htp]
    \centering
    \begin{footnotesize}
      \begin{tikzpicture}[xscale=.85,yscale=.55]
          
\draw[gray] (0,0) grid +(6,15);
\draw[gray] (-1,3) grid (0,9);!
\draw[gray] (-2,5) grid (-1,9);
\draw[gray] (-3,7) grid (-2,9);

\draw (1,0) -- +(0,15);
\draw (5,0) -- +(0,15);

% \foreach \x in {0,1,...,5} 
%   \draw[help lines] (\x,0) -- +(1,1);
% \foreach \x in {0,1,...,4} 
%   \draw[help lines] (\x,1) -- +(1,1);
% \foreach \x in {0,1,...,3} 
%   \draw[help lines] (\x,2) -- +(1,1);
% \foreach \x in {-1,0,...,2} 
%   \draw[help lines] (\x,3) -- +(1,1);
% \foreach \x in {-1,0,1} 
%   \draw[help lines] (\x,4) -- +(1,1);
% \foreach \x in {-2,-1,0} 
%   \draw[help lines] (\x,5) -- +(1,1);
% \foreach \x in {-2,-1,0} 
%   \draw[help lines] (\x,6) -- +(1,1);
% \foreach \x in {-3,-2,...,0} 
%   \draw[help lines] (\x,7) -- +(1,1);
% \foreach \x in {-3,-2,...,0} 
%   \draw[help lines] (\x,8) -- +(1,1);

\node at (.5,.25) {$\#$};
\node at (1.5,.25) {$(1,6)_1$};
\node at (2.5,.25) {$(4,2)_1$};
\node at (3.5,.25) {$(6,3)_1$};
\node at (4.5,.25) {$(3,4)_1$};
\node at (5.5,.25) {$\#$};

\node at (.5,1.25) {$(1,6)_1$};
\node at (1.5,1.25) {$(4,2)_1$};
\node at (2.5,1.25) {$(6,3)_1$};
\node at (3.5,1.25) {$(3,4)_1$};
\node at (4.5,1.25) {$\#$};

\node[blue!60!black] at (.5,2.75) {$(1,6)_1$};
\node at (.5,2.25) {$(4,2)_1$};
\node at (1.5,2.25) {$(6,3)_1$};
\node at (2.5,2.25) {$(3,4)_1$};
\node at (3.5,2.25) {$\#$};

\node[blue!60!black] at (.5,3.75) {$(1,6)_1$};
\node at (-.5,3.25) {$(4,2)_1$};
\node at (.5,3.25) {$(6,3)_1$};
\node at (1.5,3.25) {$(3,4)_1$};
\node at (2.5,3.25) {$\#$};

\node[blue] at (.5,4.75) {$(1,6)_1$};
\node[blue!60!black] at (-0.5,4.75) {$(4,2)_1$};
\node at (-.5,4.25) {$(1,3)_2$};
\node at (.5,4.25) {$(3,4)_1$};
\node at (1.5,4.25) {$\#$};

\node[blue] at (.5,5.75) {$(1,6)_1$};
\node[blue!60!black] at (-0.5,5.75) {$(4,2)_1$};
\node at (-1.5,5.25) {$(1,3)_2$};
\node at (-.5,5.25) {$(3,4)_1$};
\node at (.5,5.25) {$\#$};

\node[blue] at (.5,6.75) {$(1,6)_1$};
\node[blue] at (-0.5,6.75) {$(3,4)_1$};
\node[blue!60!black] at (-1.5,6.75) {$(1,3)_2$};
\node at (-1.5,6.25) {$(3,2)_2$};
\node at (-.5,6.25) {$\#$};

\node[blue] at (.5,7.75) {$(1,6)_1$};
\node[blue] at (-0.5,7.75) {$(3,4)_1$};
\node[blue] at (-1.5,7.75) {$(1,3)_2$};
\node at (-2.5,7.25) {$(1,2)_4$};
\node at (-1.5,7.25) {$\#$};

\node[blue] at (.5,8.75) {$(1,6)_1$};
\node[blue] at (-0.5,8.75) {$(4,3)_1$};
\node[blue] at (-1.5,8.75) {$(1,3)_2$};
\node[blue!60!black] at (-2.5,8.75) {$(1,2)_4$};
\node at (-2.5,8.25) {$\#$};
        \begin{scope}[shift={(-3,9)}]
          
\draw[gray] (0,0) grid +(6,6);
% \draw[gray] (-1,3) grid (0,9);
% \draw[gray] (-2,5) grid (-1,9);
% \draw[gray] (-3,7) grid (-2,9);

%\draw (1,0) -- +(0,9);
%\draw (10,0) -- +(0,9);

% \foreach \x in {0,1,...,3} 
% {
%   \foreach \y in {0,1,2,3,4} 
%     \draw[help lines] (\x,\y) -- +(1,1);
% }
% \draw[help lines] (4,4) -- +(1,1);
% \foreach \x in {1,2,...,5} 
%   \draw[help lines] (\x,5) -- +(1,1);

\node[blue] at (3.5,.75) {$(1,6)_1$};
\node[blue] at (2.5,.75) {$(3,4)_1$};
\node[blue] at (1.5,.75) {$(1,3)_2$};
\node[blue] at (.5,.75) {$(1,2)_4$};
\node[green!50!black] at (.5,.25) {$1$};

\node[blue] at (3.5,1.75) {$(1,6)_1$};
\node[blue] at (2.5,1.75) {$(3,4)_1$};
\node[blue] at (1.5,1.75) {$(1,3)_2$};
\node[blue] at (.5,1.75) {$(2)_3$};
\node[green!50!black] at (1.5,1.25) {$1$};

\node[blue] at (3.5,2.75) {$(1,6)_1$};
\node[blue] at (2.5,2.75) {$(3,4)_1$};
\node[blue] at (1.5,2.75) {$(3)_1$};
\node[blue] at (.5,2.75) {$(2)_2$};
\node[green!50!black] at (2.5,2.25) {$1$};

\node[blue] at (3.5,3.75) {$(1,6)_1$};
\node[blue] at (2.5,3.75) {$(3,4)_1$};
\node[green!50!black] at (1.5,3.25) {$3$};
\node[blue] at (.5,3.75) {$(2)_1$};
\node[green!50!black] at (3.5,3.25) {$1$};

\node[green!50!black] at (3.5,4.25) {$6$};
\node[blue] at (2.5,4.75) {$(3,4)_1$};
\node[green!50!black] at (2.5,4.25) {$3$};
\node[green!50!black] at (.5,4.25) {$2$};
\node[green!50!black] at (4.5,4.25) {$1$};

\node[green!50!black] at (4.5,5.25) {$6$};
\node[green!50!black] at (2.5,5.25) {$4$};
\node[green!50!black] at (3.5,5.25) {$3$};
\node[green!50!black] at (1.5,5.25) {$2$};
\node[green!50!black] at (5.5,5.25) {$1$};
        \end{scope}
      \end{tikzpicture}
 
    \end{footnotesize}
    \caption{Edge reordering along path}
    \label{fig:algo-edge}
  \end{figure}
  
\end{proof} 

At last, we want to present a third use of the generic method. In this
last case, the problem is: given a sequence of edges of an undirected
graph, can we output the same sequence where every edge is marked by a
unique identifier per connected component. 

\begin{rem}
  Connected components marking in undirected graphs is computable in
  linear time.
\end{rem}

\begin{proof}[Algorithm]

  This algorithm (depicted in Figure~\ref{fig:algo-mark}) is a
  variation of the previous one and associates to each vertex the
  label of the smallest vertex in its connected component.  In a first
  step, the agent (given a edge) only replaces any occurrence of the
  greater vertex by the smaller one. During the second step, when
  seeing the result for the smallest one, it duplicate it for the
  greater vertex. The proof of correctness work as in the previous case,
  each agent ``suppress'' one vertex and send valid data to its left
  neighbour in the first step. In the second step, it ``add'' this
  vertex correctly.

    \begin{figure}[!htp]
      \centering
      \begin{tikzpicture}[xscale=.8,yscale=.55]
        \begin{footnotesize}
          
\draw[gray] (0,0) grid +(6,16);
\draw[gray] (-1,3) grid (0,13);
\draw[gray] (-2,5) grid (-1,11);
\draw[gray] (-3,7) grid (-2,9);

\draw (1,0) -- +(0,16);
\draw (5,0) -- +(0,16);

% \foreach \x in {0,1,...,5} 
%   \draw[help lines] (\x,0) -- +(1,1);
% \foreach \x in {0,1,...,4} 
%   \draw[help lines] (\x,1) -- +(1,1);
% \foreach \x in {0,1,...,3} 
%   \draw[help lines] (\x,2) -- +(1,1);
% \foreach \x in {-1,0,...,2} 
%   \draw[help lines] (\x,3) -- +(1,1);
% \foreach \x in {-1,0,1} 
%   \draw[help lines] (\x,4) -- +(1,1);
% \foreach \x in {-2,-1,0} 
%   \draw[help lines] (\x,5) -- +(1,1);
% \foreach \x in {-2,-1,0} 
%   \draw[help lines] (\x,6) -- +(1,1);
% \foreach \x in {-3,-2,...,0} 
%   \draw[help lines] (\x,7) -- +(1,1);
% \foreach \x in {-3,-2,...,0} 
%   \draw[help lines] (\x,8) -- +(1,1);
% \foreach \x in {-2,-1,0} 
%   \draw[help lines] (\x,9) -- +(1,1);
% \foreach \x in {-2,-1,0} 
%   \draw[help lines] (\x,10) -- +(1,1);
% \foreach \x in {-1,0} 
%   \draw[help lines] (\x,11) -- +(1,1);
% \foreach \x in {-1,0,1} 
%   \draw[help lines] (\x,12) -- +(1,1);
% \foreach \x in {0,1,...,2} 
%   \draw[help lines] (\x,13) -- +(1,1);
% \foreach \x in {0,1,...,3} 
%   \draw[help lines] (\x,14) -- +(1,1);
% \foreach \x in {1,...,4} 
%   \draw[help lines] (\x,15) -- +(1,1);

\node at (.5,.25) {$\#$};
\node at (1.5,.25) {$(5,7)$};
\node at (2.5,.25) {$(4,5)$};
\node at (3.5,.25) {$(2,6)$};
\node at (4.5,.25) {$(4,1)$};
\node at (5.5,.25) {$\#$};

\node at (.5,1.25) {$(5,7)$};
\node at (1.5,1.25) {$(4,5)$};
\node at (2.5,1.25) {$(2,6)$};
\node at (3.5,1.25) {$(4,1)$};
\node at (4.5,1.25) {$\#$};

\node[blue] at (.5,2.75) {$(5,7)$};
\node at (.5,2.25) {$(4,5)$};
\node at (1.5,2.25) {$(2,6)$};
\node at (2.5,2.25) {$(4,1)$};
\node at (3.5,2.25) {$\#$};

\node[blue] at (.5,3.75) {$(4,7)$};
\node at (-.5,3.25) {$(4,5)$};
\node at (.5,3.25) {$(2,6)$};
\node at (1.5,3.25) {$(4,1)$};
\node at (2.5,3.25) {$\#$};

\node[blue] at (.5,4.75) {$(4,7)$};
\node[blue] at (-.5,4.75) {$(4,5)$};
\node at (-.5,4.25) {$(2,6)$};
\node at (.5,4.25) {$(4,1)$};
\node at (1.5,4.25) {$\#$};

\node[blue] at (.5,5.75) {$(1,7)$};
\node[blue] at (-.5,5.75) {$(4,5)$};
\node at (-1.5,5.25) {$(2,6)$};
\node at (-.5,5.25) {$(1,4)$};
\node at (.5,5.25) {$\#$};

\node[blue] at (.5,6.75) {$(1,7)$};
\node[blue] at (-.5,6.75) {$(1,5)$};
\node[blue] at (-1.5,6.75) {$(2,6)$};
\node at (-1.5,6.25) {$(1,4)$};
\node at (-.5,6.25) {$\#$};

\node[blue] at (.5,7.75) {$(1,7)$};
\node[blue] at (-.5,7.75) {$(1,5)$};
\node[blue] at (-1.5,7.75) {$(2,6)$};
\node at (-2.5,7.25) {$(1,4)$};
\node at (-1.5,7.25) {$\#$};

\node[blue] at (.5,8.75) {$(1,7)$};
\node[blue] at (-.5,8.75) {$(1,5)$};
\node[blue] at (-1.5,8.75) {$(2,6)$};
\node[green!50!black] at (-2.5,8.25) {$4_1$};

\node[blue] at (.5,9.75) {$(1,7)$};
\node[blue] at (-.5,9.75) {$(1,5)$};
\node[blue] at (-1.5,9.75) {$(2,6)$};
\node[green!50!black] at (-1.5,9.25) {$4_1$};

\node[blue] at (.5,10.75) {$(1,7)$};
\node[blue] at (-.5,10.75) {$(1,5)$};
\node[green!50!black] at (-1.5,10.25) {$6_2$};
\node[green!50!black] at (-.5,10.25) {$4_1$};

\node[blue] at (.5,11.75) {$(1,7)$};
\node[green!70!black] at (-.5,11.75) {$5_1$};
\node[green!50!black] at (-.5,11.25) {$6_2$};
\node[green!50!black] at (.5,11.25) {$4_1$};

\node[green!70!black] at (.5,12.75) {$7_1$};
\node[green!50!black] at (-.5,12.25) {$5_1$};
\node[green!50!black] at (.5,12.25) {$6_2$};
\node[green!50!black] at (1.5,12.25) {$4_1$};

\node[green!70!black] at (.5,13.75) {$7_1$};
\node[green!50!black] at (.5,13.25) {$5_1$};
\node[green!50!black] at (1.5,13.25) {$6_2$};
\node[green!50!black] at (2.5,13.25) {$4_1$};

\node[green!50!black] at (.5,14.25) {$7_1$};
\node[green!50!black] at (1.5,14.25) {$5_1$};
\node[green!50!black] at (2.5,14.25) {$6_2$};
\node[green!50!black] at (3.5,14.25) {$4_1$};

\node[green!50!black] at (1.5,15.25) {$7_1$};
\node[green!50!black] at (2.5,15.25) {$5_1$};
\node[green!50!black] at (3.5,15.25) {$6_2$};
\node[green!50!black] at (4.5,15.25) {$4_1$};
        \end{footnotesize}
      \end{tikzpicture}
      \caption{Marking connected component}
      \label{fig:algo-mark}
    \end{figure} 
\end{proof}

One can note that since sorting cannot be done in linear time for most
sequential computation models (such as Turing machines), most of the
previous examples are known not to belong to any linear class for
sequential computational models.

\section{Constructive properties}\label{sec:cons}

In this section, we study some stability properties of our functional classes.
% the classes obtained by looking at the
% stability of the elements in the classes under composition in
% particular.
We also present a speed-up result for the linear
class and show that this class is robust.\\

% By construction of our class and using the fact that output sites can
% be marked , we have a direct chain of inclusion:

First let us observe that, by definition of our complexity classes and
their ability to explicitly mark the output sites, we can easily
derive the following chain of inclusions.

\begin{rem}
  Strict real time is included in synchronized real time which is
  included in linear time.
\end{rem}

One usual question is how the functional classes behave with respect to
operations on functions. A first result is that since we can
construct the Cartesian product of cellular automata, those classes
are stable by Cartesian product.

\begin{rem}
  If $f: I^* \to O^*$ and $g: I^* \to {O'}^*$ are computable in strict
  real time (resp. synchronous real time, linear time) so is the
  Cartesian product: $(f,g) : I^* \to O^* \times {O'}^*$ defined by
  $(f,g)(w)=(f(w),g(w))$.  
\end{rem}

Another natural operation is the composition of two functions. In this
case, since the sum of two linear functions is linear, it is not
difficult to see that linear time is closed under composition.

\begin{rem}
  The set of functions computed in linear time is closed under composition.
\end{rem}

Now, let us go to some more technical stuff and look at the
possibility of speed-up.  The speed-up algorithm presented below will also serve to
prove that strict and synchronous linear time are equivalent.

\begin{prop}[Linear acceleration]\label{propLinAcc}
If a function $f$ is computable in time $n+t(n)$ then,
for any positive ratio $r\in \mathbb{Q}^+$, 
$f$ is also computable in time $n+\lceil r\,t(n)\rceil$.
\end{prop}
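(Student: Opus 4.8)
The plan is to use the classical space-and-time grouping technique, arranged so that only the genuine computation time $t(n)$ beyond the information-gathering time $n$ gets compressed. Throughout I read ``computable in time $n+t(n)$'' in the synchronous sense: there is a CA $A=(S,f)$ and a projection $\pi$ with $\pi(\langle i,n+t(n)\rangle)=\phi(w)_i$ for every $i$, all cells delivering their value at the same instant $n+t(n)$. Since the synchronous definition only constrains the value present at the target time, the case $r\ge 1$ is trivial: the target $n+\lceil r\,t(n)\rceil$ is at least $n+t(n)$, so one keeps the computed output in a stable layer which $\pi$ reads at the prescribed instant. For the same reason, once a speed-up to $n+\lceil t(n)/k\rceil$ is available for an integer $k$, holding the output stably reaches the exact ceiling whenever $1/k\le r$. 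Writing a rational $r<1$ as $p/q$ and taking $k=q$, it therefore suffices to produce, for each integer $k\ge 2$, a CA computing $f$ in time $n+\lceil t(n)/k\rceil$.

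First I would build the grouped automaton $A_k$ whose states are $k$-tuples of states of $A$, a super-cell holding $k$ consecutive cells of $A$. The point of grouping is the light-cone estimate: a super-cell $j$ together with its two neighbours covers the $3k$ original cells of $[\,(j-1)k,\,(j+2)k-1\,]$, while the values of its own block $\{jk,\dots,(j+1)k-1\}$ after $s$ steps of $A$ depend only on the original cells in $[\,jk-s,\,(j+1)k-1+s\,]$; this window stays inside the $3k$-neighbourhood exactly when $s\le k$. Hence one step of $A_k$ faithfully advances $A$ by $k$ steps, so a run of $T$ steps of $A$ is reproduced in $\lceil T/k\rceil$ steps of $A_k$ on a grouped configuration of width $\lceil n/k\rceil$.

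The new automaton $A'$ then carries, by Cartesian product, three interleaved processes on the $n$ given cells: a \emph{packing} process gathering blocks of $k$ input cells into super-cells (a left-moving signal compressing the used space from $n$ to $\lceil n/k\rceil$), the accelerated simulation of $A$ by $A_k$, and an \emph{unpacking} process spreading the resulting $k$-tuples back onto the $n$ cells so that cell $i$ carries $\phi(w)_i$, read off by a projection. A Firing-Squad-type synchronization then makes all output cells display their value simultaneously.

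The main obstacle is the timing bookkeeping. Run as three separate sequential phases the construction does not fit: packing and unpacking each cost about $n(1-1/k)$ steps, so the total would be roughly $2n(1-1/k)+(n+t(n))/k$, overshooting $n+\lceil t(n)/k\rceil$ by about $n(1-1/k)$. The crux is therefore to \emph{pipeline} the three processes, so that a super-cell begins its accelerated simulation as soon as its grouped neighbourhood is available and begins unpacking as soon as its block is settled; the packing wavefront, the simulation, and the unpacking wavefront then overlap and the whole input/output reconciliation is absorbed into the ``$+n$'' term while only $t(n)$ is divided by $k$. Verifying that these wavefronts can be scheduled on a single CA without collision, and that every output cell is ready at exactly $n+\lceil r\,t(n)\rceil$, is the delicate part; the light-cone bound and the closure under Cartesian product are the routine ingredients.
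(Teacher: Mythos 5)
Your reduction to the case $r=1/k$ with $k$ a positive integer is fine, and you have correctly located the difficulty: a sequential pack--simulate--unpack scheme overshoots by about $n(1-1/k)$. But the fix you propose --- pipelining the three phases around a compression anchored at the \emph{left} end of the array --- provably cannot close that gap. Follow the data relevant to the last output symbol $\phi(w)_{n-1}$, which in general depends on $w_{n-1}$ (think of the identity or the mirror). In your scheme the symbol $w_{n-1}$ must first travel from cell $n-1$ to its super-cell $j^*=\lfloor (n-1)/k\rfloor$, which takes at least $(n-1)-j^*\approx (n-1)(1-1/k)$ steps; only then can block $j^*$ begin its grouped simulation, and since each grouped step costs one genuine time step (advancing $A$ by $2k$ steps at once would need a radius-$2$ grouped neighbourhood), producing $\phi(w)_{n-1}$ there costs at least another $\lceil (n+t(n))/k\rceil$ steps; finally the value must travel back from $j^*$ to cell $n-1$, another $(n-1)(1-1/k)$ steps. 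The total is at least $2(n-1)(1-1/k)+(n+t(n))/k$, which exceeds the target $n+t(n)/k$ by $(n-2)(1-1/k)$ --- essentially the same quantity you were hoping to absorb. Pipelining helps near the left end, but the bottleneck at the right end is a physical round trip of length nearly $2n(1-1/k)$ that overlaps with nothing.

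The missing idea is that the compression must be \emph{centred on the middle} of the array, so that the maximal one-way travel distance drops to $\tfrac{n}{2}(1-\tfrac1k)$ and the round trip plus compressed simulation fits exactly: $2\cdot\tfrac{n}{2}(1-\tfrac1k)+\tfrac{n+t(n)}{k}=n+\tfrac{t(n)}{k}$. This raises a secondary obstacle --- no cell can know at time $0$ that it is the centre, so a centred uniform scaling of the space-time diagram is not directly implementable from the initial configuration --- and the paper resolves it by writing the centred scaling of ratio $r$ as the composition of two directional (shear-like) scalings: one with origin $\langle 0,1\rangle$ acting in the diagonal direction and one with origin $\langle n-1,1\rangle$ acting in the anti-diagonal direction. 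Each is feasible because it is triggered at an extremity, where the cell learns at time $1$ that no further information will arrive from outside, and each maps the communication links to links still compatible with the neighbourhood. The compressed output then sits on the middle segment of length $r(n-1)$ at time $1+(1+r)(n-1)/2+r\,t(n)$ and is spread back symmetrically towards both ends in $(1-r)(n-1)/2$ further steps, landing exactly at time $n+r\,t(n)$. Without this two-sided, centred construction your argument does not go through.
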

\begin{proof}
  Let $\mathcal{A}$ be a CA which computes $f$ in time $n+t(n)$.  One
  will construct a CA $\mathcal{B}$ which computes $f$ in time
  $n+r\,t(n)$.  The CA $\mathcal{B}$ will simulate the behavior of
  $\mathcal{A}$ on any input $w$ in achieving a geometric
  transformation of the space-time diagram of $\mathcal{A}$.  It will
  be done in two steps.  First the space-time diagram will be
  compressed in order to speed up the computation. Then a
  decompression will be achieved to retrieve the output.  In both
  steps, we will freely make use of rational coordinates.  To revert
  then to a discrete space-time diagram is the classic trick which
  consists in grouping and keeping some redundant information in each
  integer site.

\paragraph{First step}
To speed up the computation implies to scale the time axis by some
ratio $r$ and, due to neighborhood constraints, to scale likewise the
space axis.  However, a transformation scaling both in time and space
axis is not feasible directly from the initial configuration.
Here it will be achieved by the composition of two intermediate and symmetric directional scalings which fit the specificities of the device.\\
The first transformation takes advantage of the cell $0$ which
receives a border state from its left at step $1$ and so knows that it
is the leftmost active cell and has no information to expect from its
left.  Concretely, the transformation starts from the cell $0$ at time
$1$ and scales the space-time area above the line $t=c+1$ by the ratio
$r$ in the diagonal direction (up-left to down-right).
Figure~\ref{figCompress} depicts the rational representation of the
space-time diagram resulting from this transformation.
\begin{figure}[!htp]
\begin{center}    
  \begin{tikzpicture}[scale=.3,font=\scriptsize\bf,every left delimiter/.style={xshift=2mm},every right delimiter/.style={xshift=-2mm}]
    \def\cc{7}
    \def\hh{15} %2cc+1

    \foreach \x in {0,...,\cc} 
    \draw[rojo] (\cc-\x,\cc-\x+1) -- (\cc-\x,\hh) -- (\cc,\hh-\x) (\cc-\x,\hh) -- (0,\hh-\cc+\x);
    \foreach \y in {1,...,\cc}
    \draw[rojo] (\cc,\cc+\y) -- (0,\y) (0,2*\y-1) -- (\y-1,\y) (0,2*\y) -- (\y,\y);

    \begin{scope}
      \clip (0,1)-- (0,0) -- (\cc,0) -- (\cc,\cc+1) -- cycle;
      \foreach \x in  {1,...,\cc}
      \draw [verde] (\x,\x) -- (2*\x,0) (\x-1,\x) -- (2*\x-1,0);
    \end{scope}  
    \foreach \x in {0,...,\cc}
    \draw [verde] (\x,0) -- (\x,\x+1) (\cc-\x,0) -- (\cc,\x);
    
    \foreach \x in {0,...,\cc} \foreach \y in {0,...,\x} 
    \draw [verde,fill=verde!61] (\cc-\x+\y,\y) circle (.3cm);
    \foreach \x in {0,...,\cc}
    \draw[fill=green,green] (\x,0) circle (.3cm);

    \foreach \x in {0,...,\cc} 
    \foreach \y in {0,...,\cc}
    \draw [rojo,fill=rojo!61] (\cc-\x,1+\cc+\y) circle (.3cm);
    \foreach \x in {1,...,\cc} 
    \foreach \y in {1,...,\x}
    \draw [rojo,fill=rojo!61] (\cc-\x,\cc-\x+\y) circle (.3cm);

    \begin{scope}[outer sep=1mm]
    \node[anchor=east] at (0,0) {$0$};
    \node[anchor=east] at (0,1) {$1$};
    \node[anchor=east] at (0,\hh) {$n+t(n)$};
    \node[anchor=west] at (\cc,\cc+1) {$n$};
    \node[anchor=north] at (0,-.5) {$0$};
    \node[anchor=north] at (\cc,-.5) {$n-1$};
    \draw[decorate,decoration=brace] (\cc+.5,-.5) -- node[anchor=north, text centered]  {\bm{$w$}} (-.5,-.5);
    \draw[decorate,decoration=brace] (-.5,\hh+.5) -- node[anchor=south, text centered]{\bm{$f(w)$}} (\cc+.5,\hh+.5);
  \end{scope}
  
    \begin{scope}[xshift=15cm]
      \def\aa{2/3}
      \def\bb{1/3}
      \draw[help lines] (0,0) grid (\cc,\hh);

      \foreach \x in {0,...,\cc} 
      \draw[rojo] (\cc-\x,\cc-\x+1) -- (-\x*\aa+\cc*\aa+\hh*\bb-\bb,-\x*\bb+\cc*\bb+\hh*\aa-\aa+1) -- (\cc*\aa+\hh*\bb-\x*\bb-\bb,\cc*\bb+\hh*\aa-\x*\aa-\aa+1)
      (-\x*\aa+\cc*\aa+\hh*\bb-\bb,-\x*\bb+\cc*\bb+\hh*\aa-\aa+1) -- (-\cc*\bb+\hh*\bb+\x*\bb-\bb,\hh*\aa-\cc*\aa+\x*\aa-\aa+1);
      \foreach \y in {1,...,\cc} 
      \draw[rojo] (\cc+\y*\bb-\bb,\cc+\y*\aa-\aa+1) -- (\y*\bb-\bb,\y*\aa-\aa+1) (\y*\bb*2-2*\bb,2*\y*\aa-2*\aa+1) -- (\y-1,\y) (2*\y*\bb-\bb,2*\y*\aa-\aa+1) --  (\y,\y);
      
      \begin{scope}
        \clip (0,1)-- (0,0) -- (\cc,0) -- (\cc,\cc+1) -- cycle;
        \foreach \x in  {1,...,\cc}
        \draw [verde] (\x,\x) -- (2*\x,0) (\x-1,\x) -- (2*\x-1,0);
      \end{scope}  
      \foreach \x in {0,...,\cc}
      \draw [verde] (\x,0) -- (\x,\x+1) (\cc-\x,0) -- (\cc,\x);
      
      \foreach \x in {0,...,\cc} \foreach \y in {0,...,\x} 
      \draw [verde,fill=verde!61] (\cc-\x+\y,\y) circle (.3cm);
      \foreach \x in {0,...,\cc}
      \draw[fill=green,green] (\x,0) circle (.3cm);

      \foreach \x in {0,...,\cc} 
      \foreach \y in {0,...,\cc}
      \draw [rojo,fill=rojo!61] (\cc-\x*\aa+\y*\bb,\cc-\x*\bb+\y*\aa+1) circle (.2cm);
      \foreach \x in {1,...,\cc} 
      \foreach \y in {1,...,\x}
      \draw [rojo,fill=rojo!61] (\cc-\x+\y*\bb-\bb,\cc-\x+\y*\aa-\aa+1) circle (.2cm);
    \begin{scope}[outer sep=1mm]
    \node[anchor=east] at (0,0) {$0$};
    \node[anchor=east] at (0,1) {$1$};
    \node[anchor=east] at (0,\hh) {$n+t(n)$};
    \node[anchor=north] at (0,-.5) {$0$};
    \node[anchor=north] at (\cc,-.5) {$n-1$};
    \draw[decorate,decoration=brace] (\cc+.5,-.5) -- node[anchor=north, text centered]  {\bm{$w$}} (-.5,-.5);
    \draw[decorate,decoration=brace] (\hh*\bb-1,\hh*\aa+.3) -- node[anchor=south, text centered,fill=white,inner sep=0,outer sep=3mm]{\bm{$f(w)$}} (\cc*\aa+\hh*\bb+.5-\bb,\cc*\bb+\hh*\aa+\bb+.8);
  \end{scope}

    \end{scope}
  \end{tikzpicture}
\caption{Scaling in the diagonal direction by a ratio $r=1/2$}\label{figCompress}
\end{center}
\end{figure}
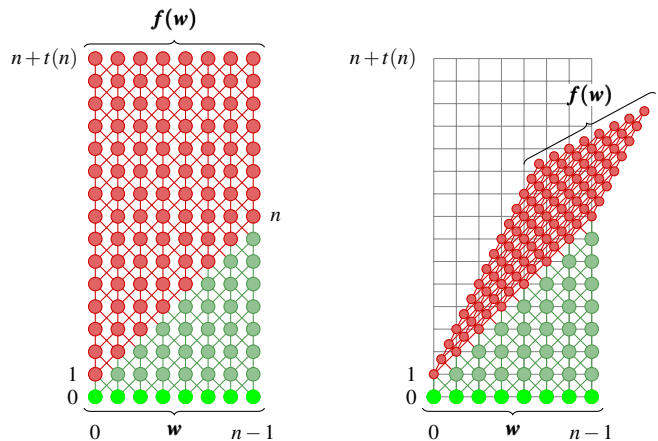

\noindent
The origin of the transformation is the site $\langle 0,1\rangle$ and its matrix is $\left(\begin{array}{cc}(1+r)/2&(1-r)/2\\(1-r)/2&(1+r)/2\end{array}\right)$.
It applies to all sites in the area $\{\langle c,t\rangle\colon 0\leq c <t\}$, i.e., the set of sites impacted by  $\langle 0,1\rangle$ with respect to the neighborhood.
Observe that the transformation is workable.  First, inside the scaled
part, the initial communication links $(-1, 1)$, $(0, 1)$ and $(1, 1)$
are mapped into the links $(-r,r)$, $((1-r)/2,(1+r)/2)$ and $(1, 1)$
which satisfy the dependency constraints.  Second the initial
communication links entering by the down right side of the scaled part
are also mapped into links satisfying the dependency constraints.
Third, no information is coming from its left side and thus leads to
no constraint.
\\\\
In a symmetric way, the second transformation starts from the right
end of the input word at time $1$ and scales the space-time area above
the line $t+c=n$ (with $n$ the size of the input $w$) by the ratio $r$
in the anti-diagonal direction (up-right to down-left).  The
transformation
$\left(\begin{array}{cc}(1+r)/2&-(1-r)/2\\-(1-r)/2&(1+r)/2\end{array}\right)$
with as origin $\langle n-1,1\rangle$ applies to all sites in the area
$\{\langle c,t\rangle\colon 0\leq n-1-c < t\}$.
\\\\
Now, as illustrated in Figure~\ref{figAccLin}, the composition of
these two symmetric scalings sharing the same ratio $r$, results in an
uniform scaling with origin $\langle (n-1)/2,(n+1)/2\rangle$ and ratio
$r$:
$$\left(\begin{array}{cc}(1+r)/2&(1-r)/2\\(1-r)/2&(1+r)/2\end{array}\right)\left(\begin{array}{cc}(1+r)/2&-(1-r)/2\\-(1-r)/2&(1+r)/2\end{array}\right)=\left(\begin{array}{cc}r&0\\0&r\end{array}\right).$$
It applies to the sites $\{\langle c,t\rangle\colon 0\leq c <n \text{ and }
\big|c-(n-1)/2\big|< t-(n-1)/2\}$. 
The uniform scaling arising from two feasible transformations is workable as well while respecting the neighborhood constraints.
\\
Then a simple calculation can confirm that the initial space-time diagram with the output written on cells $0$ to $n-1$ at time $n+t(n)$
is scaled into a diagram where the output is written on the segment of cells
$[(1-r)(n-1)/2,(1+r)(n-1)/2]$ at time $1+(1+r)(n-1)/2+r\,t(n)$.

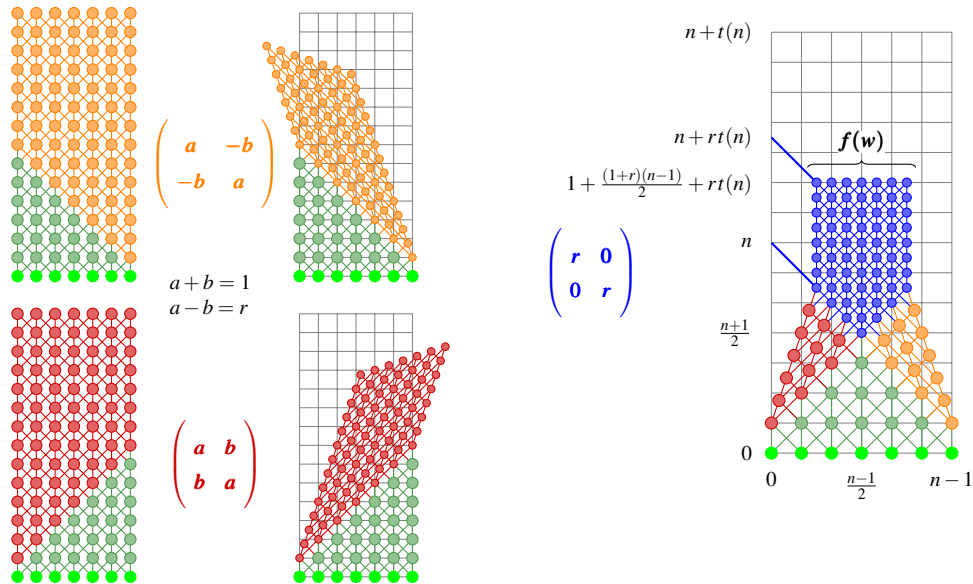
\begin{figure}[!htp]
  \begin{center}
      \def\cc{6}
  \def\hh{13} %2cc-1
  \begin{tikzpicture}[scale=.25,font=\scriptsize,every left delimiter/.style={xshift=2mm},every right delimiter/.style={xshift=-2mm}]

    \begin{scope}
  \foreach \x in {0,...,\cc} 
  \draw[orange] (\x,\cc-\x) -- (\x,\hh) -- (0,\hh-\x) (\x,\hh) -- (\cc,\hh-\cc+\x) ;
  \foreach \y in {0,...,\cc} 
  \draw[orange] (0,\cc+\y) -- (\cc,\y) (\cc,2*\y-1) -- (\cc-\y,\y-1) (\cc,2*\y) -- (\cc-\y,\y);

  \begin{scope}
\clip (0,\cc-.1)-- (0,-1) -- (\cc,-1) --(\cc,-.1) --  cycle;
\foreach \x in  {0,...,\cc}
\draw [verde] (0,\x) -- (\x+1,-1) (\x,\cc-\x) -- (2*\x-\cc-1,-1) (\x,\cc-\x-1) -- (2*\x-\cc,-1);
\end{scope}  
\foreach \x in  {0,...,\cc}
\draw [verde] (\x,-1) -- (\x,\cc-\x);

\foreach \x in {1,...,\cc} \foreach \y in {1,...,\x} 
\draw [verde,fill=verde!61] (\cc-\x,\y-1) circle (.3cm);
  \foreach \x in {0,...,\cc}
  \draw[fill=green,green] (\x,-1) circle (.3cm);
  \begin{scope}
  \clip (-1,-1) -- (-1,\hh+.4) -- (\cc+1,\hh+.4) -- (\cc+1,-1) -- cycle;
  \foreach \x in {0,...,\cc} 
  \foreach \y in {0,...,\hh}
  \draw [orange,fill=orange!61] (\x,\cc-\x+\y) circle (.3cm);
\end{scope} 
\end{scope}

\begin{scope}[orange,xshift=9.5cm]
  %\draw [->] (0,3) -- (2,3) ;
  \matrix [matrix of math nodes,left delimiter=(,right delimiter=),outer sep=0,anchor=center] at (1,5) 
  {\bm{a}&\bm{-b}\\\bm{-b}&\bm{a}\\};
\end{scope}

\begin{scope}[xshift=15cm]
  \def\aa{.75}
  \def\bb{.25}
      \draw[help lines] (0,-1) grid (\cc,\hh);

  \begin{scope}
\clip (0,\cc-.1)-- (0,-1) -- (\cc,-1) --(\cc,-.1) --  cycle;
\foreach \x in  {0,...,\cc}
\draw [verde] (0,\x) -- (\x+1,-1) (\x,\cc-\x) -- (2*\x-\cc-1,-1) (\x,\cc-\x-1) -- (2*\x-\cc,-1);
\end{scope}  
\foreach \x in  {0,...,\cc}
\draw [verde] (\x,-1) -- (\x,\cc-\x);

  \foreach \x in {0,...,\cc} 
  \draw[orange] (\x,\cc-\x) -- (\x*\aa+\cc*\bb-\hh*\bb,-\x*\bb+\cc*\bb+\hh*\aa) -- (\cc*\bb-\hh*\bb+\x*\bb,\cc*\bb+\hh*\aa-\x*\aa)
  (\x*\aa+\cc*\bb-\hh*\bb,-\x*\bb+\cc*\bb+\hh*\aa) -- (\cc+\cc*\bb-\hh*\bb-\x*\bb,\hh*\aa-\cc*\aa+\x*\aa);
  \foreach \y in {0,...,\cc} 
  \draw[orange] (-\y*\bb,\cc+\y*\aa) -- (\cc-\y*\bb,\y*\aa) (\cc-\y*\bb*2+\bb,2*\y*\aa-\aa) -- (\cc-\y,\y-1) (\cc-2*\y*\bb,2*\y*\aa) -- (\cc-\y,\y);
  
\foreach \x in {1,...,\cc} \foreach \y in {1,...,\x} 
\draw [verde,fill=verde!61] (\cc-\x,\y-1) circle (.3cm);
  \foreach \x in {0,...,\cc}
  \draw[fill=green,green] (\x,-1) circle (.3cm);
  \begin{scope}
  \clip (-\aa+\bb+\bb*\cc,-\aa+\bb+\bb*\cc) -- (\bb*\cc-\aa-\bb*\hh-\bb*.4,\bb*\cc+\bb+\aa*\hh+\aa*.4) -- (\cc+\aa-\bb*\hh-\bb*.4,-\bb+\aa*\hh+\aa*.4) -- (\cc+1,-1) -- cycle;
  \foreach \x in {0,...,\cc} 
  \foreach \y in {0,...,\hh}
  \draw [orange,fill=orange!61] (\x-\y*\bb,\cc-\x+\y*\aa) circle (.2cm);
\end{scope}

 \end{scope}
\begin{scope}[yshift=-16cm]      

    \begin{scope}[xscale=-1,xshift=-6cm]
  \foreach \x in {0,...,\cc} 
  \draw[rojo] (\x,\cc-\x) -- (\x,\hh) -- (0,\hh-\x) (\x,\hh) -- (\cc,\hh-\cc+\x) ;
  \foreach \y in {0,...,\cc} 
  \draw[rojo] (0,\cc+\y) -- (\cc,\y) (\cc,2*\y-1) -- (\cc-\y,\y-1) (\cc,2*\y) -- (\cc-\y,\y);

  \begin{scope}
\clip (0,\cc-.1)-- (0,-1) -- (\cc,-1) --(\cc,-.1) --  cycle;
\foreach \x in  {0,...,\cc}
\draw [verde] (0,\x) -- (\x+1,-1) (\x,\cc-\x) -- (2*\x-\cc-1,-1) (\x,\cc-\x-1) -- (2*\x-\cc,-1);
\end{scope}  
\foreach \x in  {0,...,\cc}
\draw [verde] (\x,-1) -- (\x,\cc-\x);

\foreach \x in {1,...,\cc} \foreach \y in {1,...,\x} 
\draw [verde,fill=verde!61] (\cc-\x,\y-1) circle (.3cm);
  \foreach \x in {0,...,\cc}
  \draw[fill=green,green] (\x,-1) circle (.3cm);
  \begin{scope}
  \clip (-1,-1) -- (-1,\hh+.4) -- (\cc+1,\hh+.4) -- (\cc+1,-1) -- cycle;
  \foreach \x in {0,...,\cc} 
  \foreach \y in {0,...,\hh}
  \draw [rojo,fill=rojo!61] (\x,\cc-\x+\y) circle (.3cm);
\end{scope} 
\end{scope}

\begin{scope}[rojo,xshift=9.5cm]
  \matrix [matrix of math nodes,left delimiter=(,right delimiter=),outer sep=0,anchor=center] at (1,5) 
  {\bm{a}&\bm{b}\\\bm{b}&\bm{a}\\};
    \node[text width=1.2cm,black] at (1,14) {$a+b=1$ $a-b=r$};
\end{scope}
 
\begin{scope}[xscale=-1,xshift=-21cm]
  \def\aa{.75}
  \def\bb{.25}
      \draw[help lines] (0,-1) grid (\cc,\hh);

  \begin{scope}
\clip (0,\cc-.1)-- (0,-1) -- (\cc,-1) --(\cc,-.1) --  cycle;
\foreach \x in  {0,...,\cc}
\draw [verde] (0,\x) -- (\x+1,-1) (\x,\cc-\x) -- (2*\x-\cc-1,-1) (\x,\cc-\x-1) -- (2*\x-\cc,-1);
\end{scope}  
\foreach \x in  {0,...,\cc}
\draw [verde] (\x,-1) -- (\x,\cc-\x);

  \foreach \x in {0,...,\cc} 
  \draw[rojo] (\x,\cc-\x) -- (\x*\aa+\cc*\bb-\hh*\bb,-\x*\bb+\cc*\bb+\hh*\aa) -- (\cc*\bb-\hh*\bb+\x*\bb,\cc*\bb+\hh*\aa-\x*\aa)
  (\x*\aa+\cc*\bb-\hh*\bb,-\x*\bb+\cc*\bb+\hh*\aa) -- (\cc+\cc*\bb-\hh*\bb-\x*\bb,\hh*\aa-\cc*\aa+\x*\aa);
  \foreach \y in {0,...,\cc} 
  \draw[rojo] (-\y*\bb,\cc+\y*\aa) -- (\cc-\y*\bb,\y*\aa) (\cc-\y*\bb*2+\bb,2*\y*\aa-\aa) -- (\cc-\y,\y-1) (\cc-2*\y*\bb,2*\y*\aa) -- (\cc-\y,\y);
  
\foreach \x in {1,...,\cc} \foreach \y in {1,...,\x} 
\draw [verde,fill=verde!61] (\cc-\x,\y-1) circle (.3cm);
  \foreach \x in {0,...,\cc}
  \draw[fill=green,green] (\x,-1) circle (.3cm);
  \begin{scope}
  \clip (-\aa+\bb+\bb*\cc,-\aa+\bb+\bb*\cc) -- (\bb*\cc-\aa-\bb*\hh-\bb*.4,\bb*\cc+\bb+\aa*\hh+\aa*.4) -- (\cc+\aa-\bb*\hh-\bb*.4,-\bb+\aa*\hh+\aa*.4) -- (\cc+1,-1) -- cycle;
  \foreach \x in {0,...,\cc} 
  \foreach \y in {0,...,\hh}
  \draw [rojo,fill=rojo!61] (\x-\y*\bb,\cc-\x+\y*\aa) circle (.2cm);
\end{scope}

 \end{scope}
\end{scope}
\end{tikzpicture}\hspace{1cm}
\begin{tikzpicture}[scale=.4,font=\scriptsize,every left delimiter/.style={xshift=2mm},every right delimiter/.style={xshift=-2mm}]

  \node at (0,0) {};
  \begin{scope}[blue,yshift=5cm]
  \matrix [matrix of math nodes,left delimiter=(,right delimiter=),outer sep=0] at (1,5) 
  {\bm{r}&\bm{0}\\\bm{0}&\bm{r}\\};
\end{scope}

      \begin{scope}[xshift=7cm,yshift=5cm]
      \def\aa{.75}
      \def\bb{.25}
      \def\acc{.5} %a-b

      \draw[help lines] (0,-1) grid (\cc,\hh);
          \draw[thick,blue] (0,9.5) -- (1.5,8) (0,6) -- (1.5,4.5);
      \begin{scope}[anchor=east,xshift=-3mm]
    \node at (0,\cc) {$n$};
    \node at (0,9.5) {$n+r\,t(n)$};
    \node at (0,-1) {$0$};
    \node at (0,\cc/2) {$\frac{n+1}{2}$};
    \node at (0,8) {$1+\frac{(1+r)(n-1)}{2}+r\,t(n)$};
    \node at (0,\hh) {$n+t(n)$};
  \end{scope}
  \begin{scope}[anchor=north,yshift=-3mm]
    \node at (0,-1) {$0$};
    \node at (\cc,-1) {$n-1$};
    \node at (\cc/2,-1) {$\frac{n-1}{2}$};
  \end{scope}
  \draw[decorate,decoration=brace] (1.2,8.5) -- node[anchor=south, text centered,fill=white,inner sep=0,outer sep=2mm]  {\bm{$f(w)$}} (4.8,8.5);
  \begin{scope}
\clip (0,-.1) --(0,-1) -- (\cc,-1)-- (\cc,-.1) -- (\cc/2,\cc/2-.1) --cycle;
\foreach \x in  {0,...,\cc}
\draw [verde] (\x,-1) -- (\x,\cc/2) (\x,-1) -- (-1,\x) (\x,-1) -- (\cc,\cc-\x-1);
\end{scope}

      \foreach \x in {0,1,2} {
        \draw [rojo] (\x,\x) -- (\x*\acc+\cc*\bb,\cc*\aa-\x*\acc);
        \draw [orange] (\cc-\x,\x) -- (\cc*\aa-\x*\acc,\cc*\aa-\x*\acc); }

      \foreach \y/\d in {0/3,1/2,2/2,3/1,4/1} {
        \draw [rojo] (\y*\bb,\y*\aa) -- (\d+\y*\bb,\y*\aa+\d);
        \draw [orange] (\cc-\y*\bb,\y*\aa) -- (\cc-\d-\y*\bb,\y*\aa+\d);}

      \foreach \y/\d in {1/1.5,2/1,3/2.5,4/2,5/3.5}{
      \draw [rojo] (\y*\bb,\y*\aa) -- (\d*\acc+\y*\bb,\y*\aa-\d*\acc);
      \draw [orange] (\cc-\y*\bb,\y*\aa) -- (\cc-\d*\acc-\y*\bb,\y*\aa-\d*\acc);}

      \foreach \x in {0,...,\cc} 
      \draw[blue] (\cc/4+\x/2,\cc/4+\hh/2) -- (\cc/4,\cc/4+\hh/2-\x/2) (\cc/4+\x/2,\cc/4+\hh/2) -- (\cc/4+\cc/2,\hh/2-\cc/4+\x/2);

      \foreach \y/\d in {6/3,7/4,8/4,9/5,10/5,11/6,12/6}{
      \draw[blue] (\cc/4,\cc/4+\y/2) -- (\cc*\bb-\y*\bb+\d,\cc*\bb+\y*\aa-\d);
      \draw [blue](\cc-\cc/4,\cc/4+\y/2) -- (\cc*\aa+\y*\bb-\d,\cc*\bb+\y*\aa-\d);}

      \foreach \x in {0,...,3}
      \draw[blue] (\cc/4+\x/2,\cc/4+\hh/2) -- (\cc/4+\x/2,\cc/4+\cc/2-\x/2) (\cc-\cc/4-\x/2,\cc/4+\hh/2) -- (\cc-\cc/4-\x/2,\cc/4+\cc/2-\x/2);
  
      \foreach \y/\nb in {0/0,1/1,2/2,3/2,4/1,5/0} %{1/0,2/1,3/2,4/2,5/1,6/0}
      \foreach \x in {0,...,\nb} {
        \draw [rojo,fill=rojo!61] (\x*\aa+\y*\bb,\x*\bb+\y*\aa) circle (.2cm);
        \draw [orange,fill=orange!61] (\cc-\x*\aa-\y*\bb,\x*\bb+\y*\aa) circle (.2cm);
      }

\foreach \t/\nb in {0/5,1/3,2/1}
\foreach \d in {1,...,\nb}
\draw [verde,fill=verde!61] (\t+\d,\t) circle (.2cm);
\foreach \t/\nb in {1/1,2/3,3/5,4/7}
       \foreach \d in {1,...,\nb}
       \draw [blue,fill=blue!61] (\cc/2-\t*\acc+\d*\acc,\cc/2+\t*\acc-\acc) circle (.15cm);
      \foreach \x in {0,...,\cc} {
        \foreach \y in {0,...,\cc}
        \draw [blue,fill=blue!61] (\cc/4+\x*\acc,.5+\cc-\cc/4+\y*\acc) circle (.15cm);       
        \draw[fill=green!61,green] (\x,-1) circle (.2cm);
      }

    \end{scope}
  \end{tikzpicture}
\caption{Uniform scaling  as composition of two symmetric scalings with a ratio $r=1/2$}\label{figAccLin}
\end{center}
\end{figure}

\paragraph{Second step}
Finally, the compressed output has to be decompressed.
The geometric construction, depicted in Figure~\ref{figDecomp},
makes use of two signals initiated from each extremity of the compressed output
and run respectively to the left and to the right with a slope
$1/r-1$. The decompression  is halted by a modified firing squad which
marks times $n + rt(n)$.

%\\+++ synchro at time $r\,t(n)$ ...
\begin{figure}[!htp]
  \begin{center}
    \begin{tikzpicture}[scale=.3]
  \def\k{3}
  \def\p{1} %(k-1)/2
  \def\nc{10} %n=k*nc
  \def\mil{4} % (n-1)//2

  \coordinate (f) at (\k*\nc/2-\k/2,0);
  \coordinate (a) at (0,-\nc*\p+\p);
  \coordinate (c) at (\nc*\p+\nc-\p-1,-\nc*\p+\p);

  % \draw[help lines] (0,-\nc*\p+\p) grid (\k*\nc-\k,0);
  \draw[help lines,xstep=3cm,ystep=3cm] (0,-\nc*\p+\p) grid (\k*\nc-\k,0);
  \draw[help lines] (0,-\nc*\p+\p) rectangle (\k*\nc-\k,0);

  \foreach \i in {0,...,\mil}{
    \coordinate (e\i) at (\nc*\p-\p+\i,-\nc*\p+\p);
    \coordinate (s\i) at (\k*\i,0);
    \coordinate (d\i) at (\nc*\p-\p+\i,-\nc*\p+\p+\k*\i-\i);
    \draw (e\i) -- (d\i) -- (s\i);
  }
  \draw[line width=1mm,azul] (0,0) -- (\k*\nc-\k,0) (e0) -- (c);
  \draw[line width=1mm,rojo] (s0) -- (\k*\nc-\k,0);
  \draw[decorate,decoration=brace] (0,.5) -- node[anchor=south, text centered]  {\bm{$f(w)$}} (\k*\nc-\k,.5);
  \draw[decorate,decoration=brace] (\nc*\p+\nc-\p-1,-\nc*\p+\p-.5) -- node[anchor=north, text centered]{\bm{$f(w)$}} (\nc*\p-\p,-\nc*\p+\p-.5);
  
  \draw[ultra thick,violet] (e0) -- (f);
  \draw[ultra thick,orange] (e0) -- (s0);

  \begin{scope}[outer sep=1mm]
    \node[anchor=east] at (s0) {$n+r\,t(n)$};
    \node[anchor=east] at (a) {$1+\frac{(1+r)(n-1)}{2}+r\,t(n)$};
    \node[anchor=north,yshift=-3mm] at (e0) {$\frac{(1-r)(n-1)}{2}$};
    \node[anchor=north,yshift=-3mm] at (c) {$\frac{(1+r)(n-1)}{2}$};
  \end{scope}
  \begin{scope}[xscale=-1,xshift=-27 cm]
    \coordinate (f) at (\k*\nc/2-\k/2,0);
      \foreach \i in {0,...,\mil}{
    \coordinate (e\i) at (\nc*\p-\p+\i,-\nc*\p+\p);
    \coordinate (s\i) at (\k*\i,0);
    \coordinate (d\i) at (\nc*\p-\p+\i,-\nc*\p+\p+\k*\i-\i);
    \draw (e\i) -- (d\i) -- (s\i);
  }
  \draw[ultra thick,violet] (e0) -- (f);
  \draw[ultra thick,orange] (e0) -- (s0);
\end{scope}
\end{tikzpicture}
\caption{Decompression step (with as ratio $r=1/3$)}\label{figDecomp}
\end{center}
\end{figure}
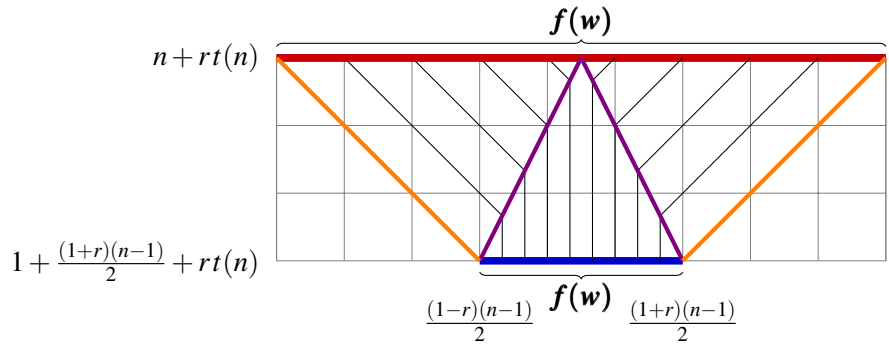
\end{proof}
\bigskip

In fact, when looking more in detail at this proof, it can be seen
that the result applies not only in (synchronous) linear time but
also in the linear extension of strict real time. 

\begin{prop}\label{lem:lin_equiv}
  The linear extension of the strict real time is equivalent to linear time.
\end{prop}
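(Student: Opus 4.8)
The plan is to establish two inclusions. Throughout write $n=|w|$ and, for a cell $i$, let $d_i=\max(i,\,n-1-i)$ be the distance from $i$ to the farther end of the input, so that the strict real time output site of cell $i$ sits at time $d_i+1$ and its linear extension at a time proportional to $d_i$. The two directions are of very different natures: one is the linear analogue of the already noted inclusion of strict into synchronous real time and costs nothing, while the other is where the geometry of the acceleration proof is really used.

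First I would dispose of the easy inclusion, namely that the linear extension of strict real time is contained in linear time. Given a cellular automaton producing $\phi$ on the V-shaped family of sites of a (scaled) strict real time, every such site occurs no later than the two extreme sites, which sit at a time linear in $n$. It therefore suffices to have each cell store its output value and hold it until a common synchronization instant $kn$, that instant being marked uniformly by a firing squad exactly as in the synchronous definitions. This raises no difficulty, since delaying an output is free for a cellular automaton; it is merely the linear counterpart of the inclusion of strict real time into synchronous real time recorded earlier.

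The substance lies in the reverse inclusion: a synchronous linear time function must be shown computable in the linear extension of strict real time, that is with its outputs displayed on the V-shaped sites, the middle cells answering \emph{earlier} than the ends. Here I would reuse, almost verbatim, the geometric construction of Proposition~\ref{propLinAcc}. Starting from an automaton $\mathcal{A}$ with flat output at time $kn$ (so $t(n)=(k-1)n$), I would apply its first compression step to fold the whole space-time diagram, output line included, into a central region; as computed there, the flat output line is carried onto a short segment of cells around $(n-1)/2$ at a single linear central time $T_c\approx(1+r)(n-1)/2+r\,t(n)$. I would then run the second step, sending from the two ends of this compressed segment the two signals of slope $1/r-1$, but \emph{omit} the final firing-squad flattening. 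The key observation, which is precisely what ``looking more in detail'' at the acceleration proof reveals, is that these signals reach the central cells first and the peripheral cells last; reading off each output value as the passing signal deposits it therefore displays the outputs on a V whose apex is at $T_c$ and whose arms rise linearly towards the two ends, that is, exactly on the sites of a linearly scaled strict real time.

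It then remains to tune the parameters so that this emerging V coincides with the prescribed linear extension of strict real time: choosing the compression ratio $r$, and hence the signal slope $1/r-1$, so that cell $i$ receives its value at a time proportional to $d_i$ matches the required shape $k\,(d_i+1)$ up to the freedom in the linear constant. I expect the one genuine obstacle to be this matching under the causality constraints of the device: one must check both that the compression delivers every output value to the central segment early enough, and that no cell is scheduled to emit before its value is actually available there, i.e.\ that the apex time $T_c$ and the signal slopes respect the dependency cone, just as the analogous checks were carried out for the scalings in Proposition~\ref{propLinAcc}. Since the central cells are the ones answering earliest, it is their schedule that pins down how small $r$ may be taken; once this is verified, the faithful simulation provided by the compression guarantees that the relocated outputs are correct, and the two inclusions together yield the claimed equivalence.
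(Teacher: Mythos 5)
Your proposal is correct and follows essentially the same route as the paper, whose entire proof of this proposition is the remark that the compression/decompression construction of Proposition~\ref{propLinAcc} applies to the V-shaped output sites as well: omitting the final firing-squad flattening of the decompression step leaves the outputs exactly on a linearly scaled strict-real-time V. Your treatment of the easy inclusion (hold-and-synchronize) and of tuning the ratio $r$ and signal slopes under the causality constraints just spells out details the paper leaves implicit.
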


To get a comprehensive picture of functional classes, one main
question is to determine the power of synchronous real time, notably
in comparison with synchronous linear time. Unfortunately we fail to
give a concrete answer.  This is not surprising, we face the same
difficulty to separate real time and linear time in the case of
recognition CA and even though the equality seems unlikely.  The next
proposition makes explicit the links between the functional and
recognition contexts regarding to real time ability.

\begin{prop}
  The following assertions are equivalent:
  \begin{enumerate}
    \item The composition of any two strict real time functions is computable in synchronous real time.
    \item The synchronous real time functional class is closed under composition.
    \item The synchronous real time and synchronous linear time functional classes are equivalent.
    \item The real time and linear time recognition classes are equivalent.
    \item The real time recognition class is closed under reversal.
    \end{enumerate}
  \end{prop}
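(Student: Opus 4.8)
The plan is to prove the five assertions equivalent through a single cycle, dispatching every routine link first and isolating one genuinely delicate step. Throughout, write $\mathrm{sRT}$, $\mathrm{SRT}$, $\mathrm{SLT}$ for the strict real time, synchronous real time and synchronous linear time functional classes, and $\mathrm{RT}$, $\mathrm{LT}$ for the real and linear time recognition classes, and set $n=|w|$. Several implications are immediate. Since $\mathrm{sRT}\subseteq\mathrm{SRT}$, closure of $\mathrm{SRT}$ under composition forces the composition of two strict real time functions into $\mathrm{SRT}$, so $(2)\Rightarrow(1)$. By \ref{lem:lin_equiv} the synchronous linear class coincides with the linear functional class, which is closed under composition (the earlier remark); hence if $\mathrm{SRT}=\mathrm{SLT}$, two synchronous real time functions compose inside $\mathrm{SLT}=\mathrm{SRT}$, giving $(3)\Rightarrow(2)$. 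Next $(4)\Rightarrow(5)$ is direct: to recognise $L^{R}$ one reverses the input and then runs the $\mathrm{RT}$ recogniser of $L$, a linear time procedure, so $L^{R}\in\mathrm{LT}=\mathrm{RT}$. Finally $(3)\Rightarrow(4)$ follows from \ref{lemRec2Fnl}: the indicator of a language in $\mathrm{LT}$ lies in the linear functional class $=\mathrm{SLT}$, which by $(3)$ equals $\mathrm{SRT}$, so cell $0$ carries the accept/reject bit already at time $n$.

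The structural backbone I would set up is the unconditional identity $\mathrm{SLT}=\mathrm{SRT}\circ\mathrm{SRT}$. The inclusion $\supseteq$ is clear, as $\mathrm{SRT}\subseteq\mathrm{SLT}$ and $\mathrm{SLT}$ is closed under composition. For $\subseteq$, take $\phi\in\mathrm{SLT}$ computed in time $k n$; by the linear acceleration of \ref{propLinAcc} I may assume the computation is synchronised at time $2n$ (pick the ratio so that $n+\lceil r(k-1)n\rceil\le 2n$). Splitting the space time diagram at time $n$ then writes $\phi=\psi\circ\chi$, where $\chi$ emits the whole configuration at time $n$ — a synchronous real time function whose output alphabet is the state set — and $\psi$ runs the remaining $n$ steps on that length $n$ configuration and projects onto $O$, again synchronous real time. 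This identity yields $(2)\Rightarrow(3)$ at once, since then $\mathrm{SLT}=\mathrm{SRT}\circ\mathrm{SRT}\subseteq\mathrm{SRT}$. To feed $(1)$ into the cycle I would use that the mirror is in fact computable in \emph{strict} real time: the symbol $w_{n-1-i}$ that cell $i$ must output reaches it before its strict deadline $\max(i,n-1-i)+1$ (the inequalities $n-1-2i\le n-i$ and $2i-n+1\le i+1$ always hold). Consequently, for $L\in\mathrm{RT}$ the indicator satisfies $f_{L^{R}}=f_{L}\circ\mathrm{mirror}$, a composition of two strict real time functions (using \ref{lemRec2Fnl} for $f_{L}$); so $(1)$ places it in $\mathrm{SRT}$, and reading cell $0$ at time $n$ recognises $L^{R}$. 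Thus $(1)\Rightarrow(5)$.

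With these edges $(5)\Rightarrow(3)\Rightarrow(2)\Rightarrow(1)\Rightarrow(5)$ and $(3)\Rightarrow(4)\Rightarrow(5)$ in place, the only loop left to close is the hard bridge $(5)\Rightarrow(3)$. By the backbone identity this is exactly the statement that closure of $\mathrm{RT}$ under reversal forces $\mathrm{SRT}$ to be closed under composition, i.e. that the time $2n$ computation $\psi\circ\chi$ above can be re-synchronised to time $n$ at \emph{every} output cell. The recognition hypothesis $(5)$ (equivalently $(4)$) furnishes precisely such a real time speed-up, but only for an answer read off cell $0$; the real work is to obtain the analogous speed-up simultaneously at all cells $i$. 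The route I would take is, for each $i$, to split $w$ at position $i$, compute the contribution of the left part by a real time sweep arriving at $i$ from the left, and the contribution of the right part by a sweep arriving from the right — the latter being turned into a real time computation precisely by the reversal closure — so that cell $i$ can combine the two by time $n$.

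Carrying this out uniformly in $i$, and folding the two opposite sweeps into a single linearly bounded space time diagram in the geometric style of \ref{propLinAcc}, is where essentially all the content concentrates, and it is the step I expect to be the main obstacle. This is unsurprising: $(5)\Rightarrow(3)$ is the functional incarnation of the open equality between real and linear time recognition, so the difficulty there is exactly the difficulty of that open problem, and every other implication above is deliberately arranged to be routine.
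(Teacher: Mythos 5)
There are two genuine gaps, one of which is fatal to the logical structure and one of which contradicts a theorem of this very paper.

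First, your edge $(1)\Rightarrow(5)$ rests on the claim that the mirror is computable in \emph{strict} real time, justified by checking that each symbol $w_{n-1-i}$ can physically travel to cell $i$ before its deadline. That transit argument only bounds the arrival time of a single symbol; it ignores the bandwidth bottleneck at the centre of the word, where $\Theta(n)$ symbols must cross in both directions through sites that each carry only $O(1)$ bits. Section~\ref{sec:hier} of the paper proves, via the uniform-triangle theorem, that the mirror is \emph{not} computable in strict real time --- indeed that is the paper's main separation result. So $f_{L^R}=f_L\circ\mathrm{mirror}$ is not a composition of two strict real time functions and hypothesis $(1)$ does not apply to it. The fix (Lemma~\ref{lem1to5}) is to avoid moving the whole word: run the left-right mirrored copy of the real time recognizer of $L$, which recognizes $L^R$ on cell $n-1$ at time $n$, i.e.\ computes $g(w)=0^{|w|-1}1$ or $0^{|w|}$ in strict real time, and compose with $h(x_1\cdots x_n)=x_n0^{n-1}$, which only has to transport \emph{one} symbol across the array and is genuinely strict real time.

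Second, and more structurally, your cycle needs $(5)\Rightarrow(3)$, and you do not prove it: you call it the ``hard bridge,'' sketch a per-cell splitting idea, and stop, suggesting its difficulty is that of the open problem $\mathrm{RT}$ vs.\ $\mathrm{LT}$. That is a misreading: the implication is conditional, hence not open, and the paper closes it in two concrete steps. It first invokes the Ibarra--Jiang theorem for $(5)\Rightarrow(4)$ (the only nontrivial direction of $(4)\Leftrightarrow(5)$, which your argument never supplies either), and then proves $(4)\Rightarrow(3)$ in Lemma~\ref{lem4to3} by introducing the padded languages $L_q=\{w\sharp^i\colon f(w)_i=q\}$, showing each is linear time recognizable, using $(4)$ to get real time recognizers, taking their cross product, folding the diagram to dispense with the padding, and compressing as in Proposition~\ref{propLinAcc}. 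Without some such argument your graph of implications only yields $(1)\Leftarrow(2)\Leftrightarrow(3)\Rightarrow(4)\Rightarrow(5)$, which is not the claimed equivalence. On the positive side, your unconditional identity $\mathrm{SLT}=\mathrm{SRT}\circ\mathrm{SRT}$ (split the accelerated time-$2n$ diagram at time $n$, outputting the intermediate configuration over the state alphabet) is correct and gives a direct proof of $(2)\Rightarrow(3)$ that the paper obtains only by going around the cycle through $(1)$, $(5)$ and $(4)$; it is worth keeping even though it cannot replace the two missing links above.
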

  \begin{proof}
    The equivalence $(4) \Leftrightarrow (5)$ is far to be trivial: it is a meaningful result established by O.~H.~Ibarra and T.~Jiang~\cite{IbarraJiang88}.
The implication  $(3) \Rightarrow (2)$ is straightforward as linear time functional class is closed under composition. The implication  $(2) \Rightarrow (1)$ is
immediate.
The implications $(1) \Rightarrow (5)$ and $(4) \Rightarrow (3)$  are presented below in Lemma~\ref{lem1to5} and Lemma~\ref{lem4to3}.
\end{proof}% One of the few things we know is the result of Ibarra and Jiang~\ref{???} which states that the class of real time CA is closed under reversal if and only if real time and linear time CA have the same power of recognition. 

\begin{lem}\label{lem1to5}
  If the composition of any two strict real time functions is computable in synchronous real time then the real time recognition class is closed under reversal.
\end{lem}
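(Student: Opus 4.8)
The plan is to realise reversal at the level of recognition as a single application of the hypothesis. Write $n=|v|$ and let $\chi_{L^{R}}$ denote the \emph{left-end indicator} of $L^{R}$: the function with $\chi_{L^{R}}(v)_0=1$ when $v^{R}\in L$ (equivalently $v\in L^{R}$), $\chi_{L^{R}}(v)_0=0$ otherwise, and $\chi_{L^{R}}(v)_i=0$ for every $i\ge 1$. A \emph{synchronous} real time program for $\chi_{L^{R}}$ is nothing but a real time recognizer for $L^{R}$, since cell $0$ then carries the decision bit precisely at time $n$. So it suffices to exhibit $\chi_{L^{R}}$ as a composition of two \emph{strict} real time functions and invoke the hypothesis. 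The point I would stress at the outset is that one should \emph{not} try to reverse the whole word: the mirror is only guaranteed to be synchronous (see the examples of Section~\ref{sec:prop}), hence it is useless as a strict factor. Instead I would recognise $L^{R}$ at the far end, which is cheap, and then move only a single decision bit.

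For the first strict factor $\rho$, I would use the elementary left--right reflection symmetry of cellular automata, mirroring a real time recognizer $\mathcal{A}$ for $L$ into the automaton $\mathcal{A}'$ with local rule $f'(a,b,c)=f(c,b,a)$. A straightforward induction gives $\langle c,t\rangle_{\mathcal{A}'}(v)=\langle n-1-c,t\rangle_{\mathcal{A}}(v^{R})$, so taking $c=n-1$ and $t=n$ shows that the rightmost cell of $\mathcal{A}'$ enters an accepting state at time $n$ exactly when $v^{R}\in L$. Reading this bit (accept $\mapsto 1$, every other state $\mapsto 0$) defines $\rho(v)=0^{n-1}1$ if $v\in L^{R}$ and $\rho(v)=0^{n}$ otherwise. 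This $\rho$ is strict: the strict deadline of the rightmost cell is precisely $n$, which is where the decision already sits, while all the remaining output sites carry the constant $0$ (only the output cell of $\mathcal{A}$, i.e. cell $n-1$ of $\mathcal{A}'$, ever holds an accepting or rejecting state). In short, $\rho$ is just the right-anchored version of Lemma~\ref{lemRec2Fnl}.

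For the second strict factor $R$, take $R(w)_0=w_{|w|-1}$ and $R(w)_i=0$ for $i\ge 1$, that is, relocate the last symbol to the front and erase the rest. This is strict for a trivial reason: the strict deadline of cell $0$ is $n$, whereas a single left-moving signal carrying $w_{n-1}$ reaches cell $0$ at time $n-1$, and every other site outputs the constant $0$. Composing, $R\circ\rho$ maps $v$ to the word whose cell $0$ equals $\rho(v)_{n-1}$, which is $1$ iff $v\in L^{R}$, and whose remaining cells are $0$; thus $\chi_{L^{R}}=R\circ\rho$, a composition of two strict real time functions.

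Finally I would apply the hypothesis to the pair $(\rho,R)$: since both are strict real time, $\chi_{L^{R}}=R\circ\rho$ is computable in synchronous real time, so some cellular automaton and projection $\pi$ satisfy $\pi(\langle 0,n\rangle)=1$ iff $v\in L^{R}$. Turning this into a recognizer is routine: using the firing-squad signal that already marks time $n$ in the synchronous construction, let cell $0$ pass into an accepting or rejecting state exactly at time $n$ (and never before), which gives a real time recognizer for $L^{R}$. The only genuine obstacle here is the conceptual one flagged above, namely realising that full reversal is the wrong intermediate object and that the two strict factors should instead be ``decide at the right end'' ($\rho$) and ``carry the single decision bit to the left end'' ($R$); once the decomposition $\chi_{L^{R}}=R\circ\rho$ is in hand, checking that $\rho$ and $R$ are strict and passing from the synchronous output to an accept/reject recognizer are all elementary.
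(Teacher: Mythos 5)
Your proposal is correct and takes essentially the same route as the paper: your $\rho$ is exactly the paper's mirrored recognizer $g$ (outputting $0^{n-1}1$ iff $v\in L^R$), your $R$ is the paper's $h(x_1\cdots x_n)=x_n0^{n-1}$, and the composition $R\circ\rho$ is the paper's indicator function $i$ to which the hypothesis is applied. The only difference is that you spell out the strictness checks and the final conversion to a recognizer in more detail than the paper does.
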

\begin{proof}
  Let $L$ be a language recognized in real time and its reverse denoted by $L^R$ . According to Lemma~\ref{lemRec2Fnl}, a CA $\mathcal{A}$ computes in strict real time the function $f$ defined by $f(w)=10^{|w|-1}$ if $w\in L$ and $f(w)=0^{|w|}$ otherwise. And its mirror CA  computes in strict real time the function $g$ defined by $g(w)=0^{|w|-1}1$ if $w\in L^R$ and $g(w)=0^{|w|}$ otherwise.
  Independently  the function $h$ defined by $h(x_1,\cdots,x_n)=x_n0^{n-1}$ is also computable in strict real time. 
  The composition of $h$ and $g$ corresponds to the function $i=h\circ g$ defined by $i(w)=10^{|w|-1}$ if $w\in L^R$ and $i(w)=0^{|w|}$ otherwise.
  If the hypothesis holds then $i$ is computable in synchronous real time and it follows that $L^R$ is recognized in real time. 
\end{proof}

\begin{lem}\label{lem4to3}
  If the real time and linear time recognition classes are equivalent then the synchronous real time and synchronous linear time functional classes are equivalent.
\end{lem}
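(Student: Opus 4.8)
The plan is to prove only the nontrivial inclusion, that every function computable in synchronous linear time is computable in synchronous real time; the reverse inclusion is the chain of inclusions noted above. So let $\phi\colon I^*\to O^*$ be computed by a cellular automaton $\mathcal{A}$ with projection $\pi$ in synchronous linear time, with $\pi(\langle i,k|w|\rangle)=\phi(w)_i$. Since $O$ is finite and the synchronous classes are closed under Cartesian product, it suffices to treat, for each output letter $o\in O$, the Boolean function $\chi_o$ with $\chi_o(w)_i=1$ exactly when $\phi(w)_i=o$: if every $\chi_o$ is synchronous real time, then a projection recombining the letters yields $\phi$ in synchronous real time. Thus I reduce to a single Boolean synchronous-linear function $\chi$, and the goal becomes to build a cellular automaton in which every cell $i$ holds $\chi(w)_i$ precisely at time $|w|$.

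The bridge to recognition is a position-marked language. Over the alphabet $I\cup\underline{I}$, let $P$ be the set of words containing exactly one marked letter, at some position $i$, whose underlying word $w$ satisfies $\chi(w)_i=1$. First I would check that $P$ is recognizable in linear time: cell $0$ can simulate $\mathcal{A}$ on the unmarked word, locate the unique mark, wait for the relevant value to be produced on cell $i$ at time $k|w|$, and have it transmitted leftwards to cell $0$, the whole taking time linear in $|w|$; rejecting inputs with the wrong number of marks is immediate. By the hypothesis, real-time and linear-time recognition coincide, so $P$ is recognizable in real time; and, since that hypothesis is equivalent, by Ibarra and Jiang (item~(5) of the proposition above), to the closure of real time under reversal, $P^R$ is real-time recognizable as well, i.e. $P$ is also recognizable in real time with the verdict delivered on the rightmost cell.

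It remains to transfer these boundary verdicts to the marked cell and to run the construction uniformly. In the functional automaton the input $w$ carries no mark, but every cell may play the role of the marked position simultaneously: I would have each cell $i$ launch, toward its right, the real-time recognizer for the suffix it heads, and, toward its left, the mirror recognizer obtained from the $P^R$-recognizer, so that the two computations meet and are combined at cell $i$. Because the left context reaches cell $i$ in reversed arrival order, it is here that closure under reversal is indispensable; and because the amount of information any such cell must integrate is linear in $|w|$ while linear time is, as recalled in Section~\ref{sec:def}, insensitive to the parallel-versus-sequential input mode, the per-cell computation can be carried out within the real-time budget $|w|$. Finally, the individual verdicts become available at possibly distinct times, each at most $|w|$, so I would store each of them and release them synchronously at time exactly $|w|$ by a firing-squad synchronization marking that line, exactly as the output sites are distinguished in the definitions.

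The main obstacle is precisely this interior step: turning the two boundary real-time recognizers for $P$ into a single automaton that, at every cell $i$ at once, decides the coupled predicate $\chi(w)_i=1$ within real time. The difficulty is that $\chi(w)_i$ genuinely depends on both sides of $i$ in an inseparable way, and cells close to an endpoint receive almost all of their data sequentially from one side with essentially no slack; the two ingredients that make this possible are the closure of real time under reversal, to process the left context, and the invariance of linear time under the input mode, to absorb the sequential arrival, both furnished by the hypothesis. Everything else — the reduction to a Boolean output, the linear-time recognizability of $P$, and the final synchronization — is routine bookkeeping.
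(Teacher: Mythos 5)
There is a genuine gap, and it sits exactly where you place your ``main obstacle'': you name the difficulty but do not resolve it. Two concrete problems arise. First, your language $P$ lives over a marked alphabet, so the real-time recognizer for $P$ takes as input a word with the mark at one specific position $i$; the inputs for different $i$ are different words, hence genuinely different computations of that recognizer. To have ``every cell play the role of the marked position simultaneously'' you would need to superimpose $|w|$ distinct runs in one automaton, which requires a non-constant number of states. Nothing in the hypothesis lets you collapse these runs. Second, even granting free access to real-time recognizers for $P$ and $P^R$, the step ``the two computations meet and are combined at cell $i$'' is not a construction: those recognizers deliver their verdicts at an \emph{endpoint} at time $|w|$, and a two-way real-time computation does not factor through the state of cell $i$ as a pair (function of the prefix, function of the suffix). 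Arranging for every cell $i$ to hold $\chi(w)_i$ at time $|w|$ is precisely the definition of synchronous real-time computability of $\chi$, so at this point the argument is circular; invoking closure under reversal and input-mode invariance of linear time does not supply the missing automaton.

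For comparison, the paper's proof avoids both issues by never marking positions inside the input. It defines padded languages $L_q=\{w\sharp^i\colon f(w)_i=q\}$, shows they are linear-time recognizable, applies the hypothesis to get real-time recognizers $\mathcal{A}_q$, and then exploits the fact that a single real-time run on $w\sharp^{|w|}$ contains the real-time runs on all prefixes $w\sharp^i$: one constant-state automaton therefore emits the whole output sequence on cell $0$ at times $|w|,\dots,2|w|$. A geometric compression (the transformation of Proposition~\ref{propLinAcc}) then slides the first half of this vertical sequence onto the horizontal segment $\langle 0,|w|\rangle,\dots,\langle (|w|-1)/2,|w|\rangle$, and a symmetric construction handles the second half. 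The prefix-embedding trick is the ingredient your marked-language route lacks; if you want to keep your per-position framing, you would need to replace the interior mark by a suffix length, which is essentially to rediscover the paper's $L_q$.
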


\begin{proof}
The proof proceeds in two steps. First, making use of the hypothesis, we will show that if a function $f$ is computable in linear time then there exists a CA which for any input $w$ yields every $i$-th output symbol $f(w)_i$ on the site $\langle 0,|w|+i\rangle$. % + par sym : $f(w)_i$ sur $\site{0}{2(|w|-i-1)}$
Second, a compression will allow to get the first half of the output at synchronous real time.
As regards the second half of the output, these two steps will be applied in a symmetric way.
 
\paragraph{First step}
Let $f$ be a function computable in linear time.
So there exists a CA $\mathcal{A}$ which on every input $w$ outputs $f(w)$
at time $2|w|$ (See Figure~\ref{figlqlinrec}(a)).
In addition, the CA $\mathcal{A}$ can be completed in order to get 
the output $f(w)$ on the leftmost cell between times $2|w|$ and $4|w|$:
from the time  $2|w|$, each output symbol has to be sent at speed $1/2$ towards the leftmost cell (See Figure~\ref{figlqlinrec}(b)). In particular the $i$-th output symbol $f(w)_i$ reaches the leftmost cell at time $2|w|+2i-2$.
Now, let us translate what does it mean in term of recognition capacity.
We introduce a padding symbol $\sharp$  not belonging to the input alphabet $I$.
For any symbol $q$ of the output alphabet, consider the language $L_q=\{w\sharp^i\colon f(w)_i=q\}$.
Then the CA $\mathcal{A}$ provides the basic ingredients to recognize the language $L_q$ and that in linear time (See Figure~\ref{figlqlinrec}(c)). 

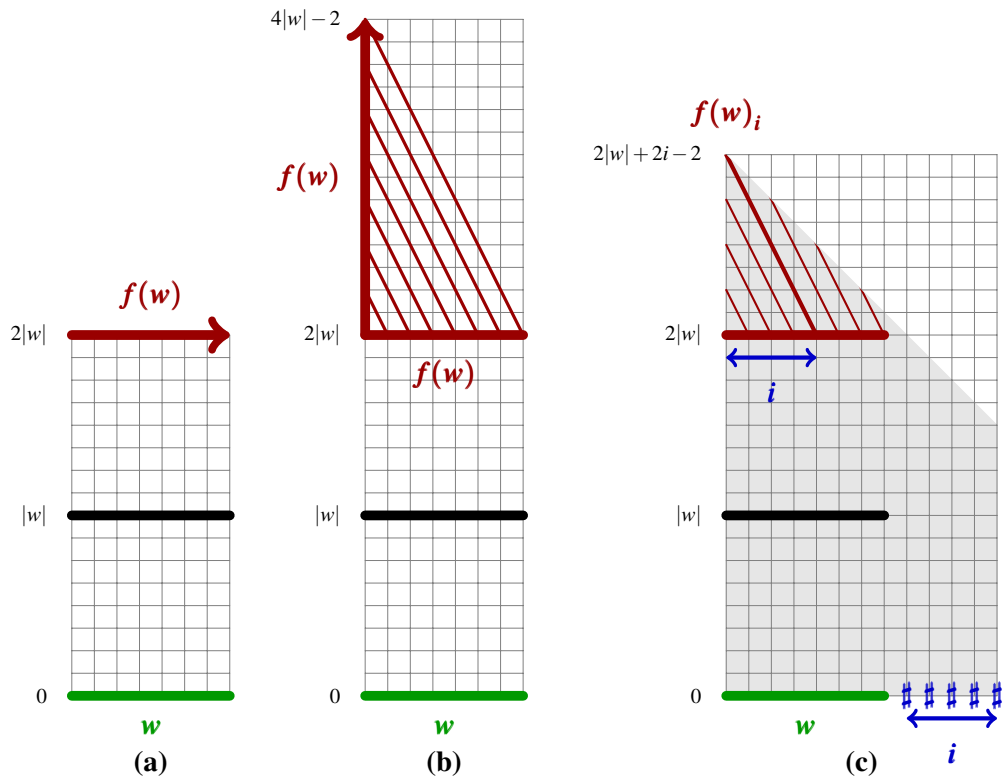
\begin{figure}[!htp]
\begin{center}
      \begin{tikzpicture}[scale=.3,cap=round,line width=1.3mm,text centered,outer sep=2mm]
    \def\nb{8}
    \draw[help lines] (1,0) grid (\nb,2*\nb);
    \draw[incol] (1,0) -- node[anchor=north]{\normalsize $\bm{w}$} (\nb,0);
    \draw (1,\nb) --  (\nb,\nb);
    \draw[outcol,->] (1,2*\nb) -- node[anchor=south]  {\normalsize $\bm{f(w)}$} (\nb,2*\nb);
    \node at (\nb/2+.5, -3) {\bf (a)};

    \begin{scope}[anchor=east,font=\scriptsize]
      \node at (1,0) {$0$};
      \node at (1,\nb) {$|w|$};
      \node at (1,2*\nb) {$2|w|$};
    \end{scope}

    \begin{scope}[xshift=13cm]
    \draw[help lines] (1,0) grid (\nb,4*\nb-2);
    \draw[incol] (1,0) -- node[anchor=north]{ $\bm{w}$} (\nb,0);
    \draw (1,\nb) -- (\nb,\nb);
    \draw[outcol] (1,2*\nb) -- node[anchor=north] {$\bm{f(w)}$} (\nb,2*\nb);
    \draw[outcol,->] (1,2*\nb) -- node[anchor=east] {$\bm{f(w)}$} (1,4*\nb-2);

    \foreach \c in {1,...,\nb}
    \draw[outcol,very thick] (\c,2*\nb) -- (1,2*\nb+2*\c-2);
    \node at (\nb/2+.5, -3) {\bf (b)};

    \begin{scope}[anchor=east,font=\scriptsize]
      \node at (1,0) {$0$};
      \node at (1,\nb) {$|w|$};
      \node at (1,2*\nb) {$2|w|$};
      \node at (1,4*\nb-2) {$4|w|-2$};
    \end{scope}

    \end{scope}
    
    \begin{scope}[xshift=29cm]
      \def\nbi{5}
      \draw[fill=gray!20,gray!20,line width=0] (1,0) -- (1,2*\nb+2*\nbi-2) -- (\nb+\nbi,\nb+\nbi-1) -- (\nb+\nbi,0) -- cycle;
      \draw[help lines] (1,0) grid (\nb+\nbi,2*\nb+2*\nbi-2);
      \begin{scope}
        \clip (0,0) --  (0,2*\nb+2*\nbi-2) -- (1,2*\nb+2*\nbi-2) -- (\nb+\nbi,\nb+\nbi-1) -- (\nb+\nbi,0) -- cycle;
        \foreach \c in {1,...,\nb}
        \draw[outcol,thick] (\c,2*\nb) -- (1,2*\nb+2*\c-2);
        \draw[outcol,ultra thick] (\nbi,2*\nb) -- (1,2*\nb+2*\nbi-2);
        \draw[outcol] (1,2*\nb) --  (\nb,2*\nb);
      \end{scope}
      \node at (\nb/2+\nbi/2+.5, -3) {\bf (c)};
      \foreach \c in {1,...,\nbi}
      \node[azul] at (\nb+\c,0) { $\bm{\sharp}$};
      \draw[azul,<->,yshift=-1cm,ultra thick] (\nb+1,0) -- node[anchor=north]{$\bm{i}$} (\nb+\nbi,0);

      \draw[azul,<->,yshift=-1cm,ultra thick] (1,2*\nb) -- node[anchor=north]{$\bm{i}$} (\nbi,2*\nb);

      \draw[incol] (1,0) -- node[anchor=north]{ $\bm{w}$} (\nb,0);
      \draw (1,\nb) -- (\nb,\nb);
      \node[outcol,anchor=south] at (1,2*\nb+2*\nbi-2) {$\bm{f(w)_i}$}; %{$\bm{y_i}$};

    \begin{scope}[anchor=east,font=\scriptsize]
      \node at (1,0) {$0$};
      \node at (1,\nb) {$|w|$};
      \node at (1,2*\nb) {$2|w|$};
      \node at (1,2*\nb+2*\nbi-2) {$2|w|+2i-2$};
    \end{scope}

    \end{scope}

  \end{tikzpicture}
  \caption{The CA $\mathcal{A}$}\label{figlqlinrec}
\end{center}
\end{figure}

\noindent
Next, the assumption that real time and linear time are equivalent in the recognition framework, 
allows one to deduce the existence of a CA $\mathcal{A}_q$
which recognizes the language $L_q$ in real time.
Then consider the CA $\mathcal{B}$ resulting from the cross product of all  the CA $\mathcal{A}_q$ where $q$ ranges over the output alphabet. 
Such a CA $\mathcal{B}$ has everything it needs to produce,
on input $w\sharp^i$ the $i$-th output symbol $f(w)_i$ at time $|w|+i-1$.
See Figure~\ref{figlqrtrec}.
Moreover observe that the running of $\mathcal{B}$ 
on the input word $w\sharp^{|w|}$ contains all the real-time evolutions 
on the prefixes of  $w\sharp^{|w|}$. That means $\mathcal{B}$ is able to produce the whole output $f(w)$ on the cell $0$ between times $|w|$ and $2|w|$.
Lastly, in order to $\mathcal{B}$ works in the same way on the input
$w$ clear of padding symbol, just fold the space-time diagram along
the vertical line $c=|w| -1/2$.

\begin{figure}
\begin{center}
  \begin{tikzpicture}[scale=.3,cap=round,line width=1.3mm,text centered,outer sep=2mm]
    \def\nb{8}
    \def\nbi{5}
    \draw[fill=gray!20,gray!20,line width=0] (1,0) -- (1,\nb+\nbi-1)  -- (\nb+\nbi,0) -- cycle;
    \draw[help lines] (1,0) grid (\nb+\nbi,\nb+\nbi-1);
    \foreach \c in {1,...,\nbi}
    \node[azul] at (\nb+\c,0) {$\bm{\sharp}$};
    \draw[azul,<->,yshift=-1cm,ultra thick] (\nb+1,0) -- node[anchor=north]{$\bm{i}$} (\nb+\nbi,0);

    \draw[incol] (1,0) -- node[anchor=north]{$\bm{w}$} (\nb,0);
    \node[outcol,anchor=south] at (1,\nb+\nbi-1) {$\bm{f(w)_i}$}; %{$\bm{y_i}$};

    \begin{scope}[anchor=east,font=\scriptsize]
      \node at (1,0) {$0$};
      \node at (1,\nb) {$|w|$};
      \node at (1,\nb+\nbi-1) {$|w|+i-1$};
    \end{scope}

    \begin{scope}[xshift=18cm]
    \draw[help lines] (1,0) grid (2*\nb,2*\nb-1);
    \draw[azul,<->,yshift=-1cm,ultra thick] (\nb+1,0) -- node[anchor=north]{$\bm{|w|}$} (2*\nb,0);

\foreach \c in {1,...,\nb}{
    \draw[gray,semithick] (\nb+\c,0) -- (1,\nb+\c-1);
 \node[azul] at (\nb+\c,0) {$\bm{\sharp}$};
}
    \draw[incol] (1,0) -- node[anchor=north]{$\bm{w}$} (\nb,0);
    \draw[outcol,->] (1,\nb) -- node[anchor=east] {$\bm{f(w)}$} (1,2*\nb-1);

    \begin{scope}[anchor=east,font=\scriptsize]
      \node at (1,0) {$0$};
      \node at (1,\nb) {$|w|$};
      \node at (1,2*\nb-1) {$2|w|-1$};
    \end{scope}

    \draw[dashed,thin,red] (\nb+.5,0) -- +(0,2*\nb-1);
    \node[anchor=south,font=\tiny] at (\nb+.5,2*\nb-2) {$|w|-1/2$};
  \end{scope}

\end{tikzpicture}
  \caption{The CA $\mathcal{B}$}\label{figlqrtrec}
\end{center}
\end{figure}
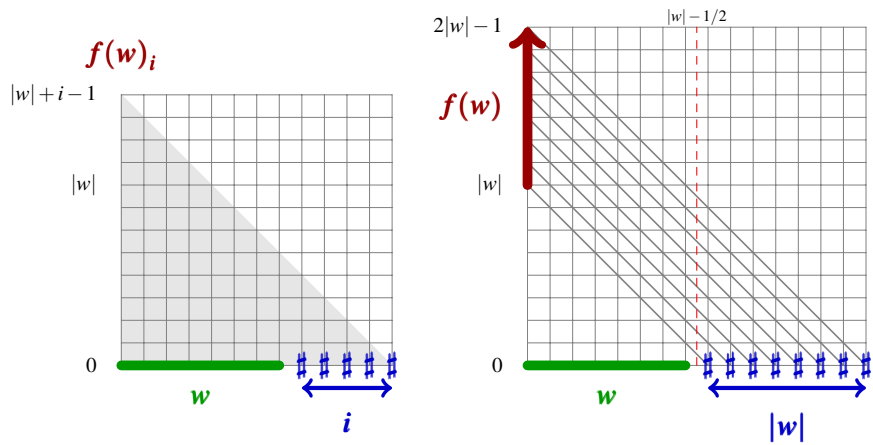

\paragraph{Second step}
Now we will retrieve the first half of the output in synchronous real time.
For this purpose, we will apply the compression presented in Proposition~\ref{propLinAcc} on the CA $\mathcal{B}$, and thereby move the segment $[\langle 0,|w|\rangle,\cdots,\langle 0,|w|+(|w|-1)/2\rangle]$ which yields the first half of the output 
to the segment $[\langle 0,|w|\rangle,\cdots,\langle (|w|-1)/2,|w|\rangle]$.
Precisely, making use of the transformation $\left(\begin{array}{cc}2/3&1/3\\1/3&2/3\end{array}\right)$ with origin  $\langle 0,1\rangle$, 
each of the site $\langle 0,|w|+i\rangle$ carrying the $i$-th output symbol
is mapped to the site $\langle (|w|-1+i)/3,1+2(|w|-1+i)/3\rangle$. See Figure~\ref{fighalf}(b). Next, as depicted in Figure~\ref{fighalf}(c), the data are transmitted at maximal speed to the left. Thus, for all indices $i$ into the first half, i.e. $i\leq (|w|-1)/2$, the  $i$-th output symbol reaches after $(|w|-1-2i)/3$ steps the site $\langle i,|w|\rangle$.\\

\begin{figure}
\begin{center}
  \begin{tikzpicture}[scale=.3,cap=round,line width=1mm,text centered,outer sep=2mm]
    \def\nb{8}
    \def\demi{4}
    \draw[help lines] (1,0) grid (\nb,2*\nb-1);

    \draw[fill=incol!20,incol!20,very thin] (1,0) --  (\nb,0) -- (\nb,\nb-1) --cycle;
    \draw[fill=orange!20,orange!20,very thin] (1,1) -- (1,2*\nb-1) -- (\nb,2*\nb-1) -- (\nb,\nb) --cycle;
    \draw[help lines] (1,0) grid (\nb,2*\nb-1);

    \draw[incol] (1,0) -- node[anchor=north]{ $\bm{w}$} (\nb,0);
    \draw[outcol,->] (1,\nb) -- node[anchor=east] {$\bm{f(w)}$} (1,2*\nb-1);
    \node at (\nb/2+.5,-3) {\bf (a)};

    \foreach \c in {1,...,\nb} 
      \draw [outcol,fill=outcol] (1,\nb+\c-1) circle (.1cm);

    \begin{scope}[anchor=east,font=\scriptsize]
      \node at (1,0) {$0$};
      \node at (1,\nb) {$|w|$};
      \node at (1,2*\nb-1) {$2|w|-1$};
    \end{scope}

    \begin{scope}[xshift=14cm]
      \draw[help lines] (1,0) grid (\nb,2*\nb-1);

      \coordinate (a) at (2/3+\nb/3,1/3+2*\nb/3);
      \coordinate (b) at (1/3+2*\nb/3,-1/3+4*\nb/3);
      \coordinate (c) at (-1/3+4*\nb/3,-2/3+5*\nb/3);

    \draw[fill=incol!20,incol!20,very thin] (1,0) --  (\nb,0) -- (\nb,\nb-1) --cycle;
    \draw[fill=orange!20,orange!20,very thin] (1,1) -- (b) -- (c) -- (\nb,\nb) --cycle;
    \draw[help lines] (1,0) grid (\nb,2*\nb-1);

    \draw[incol] (1,0) -- node[anchor=north]{ $\bm{w}$} (\nb,0);
    \draw[outcol,->] (a) -- node[anchor=west] {$\bm{f(w)}$} (b);
    \node at (\nb/2+.5,-3) {\bf (b)};

    \foreach \c in {1,...,\nb} {
      \coordinate (d\c) at (\nb/3+\c/3+1/3,2*\nb/3+2*\c/3-1/3);
      \draw [outcol,fill=outcol] (d\c)   circle (.1cm);
    }

    \begin{scope}[anchor=east,font=\scriptsize]
      \node at (1,0) {$0$};
      \node at (1,\nb) {$|w|$};
      \node at (1,2*\nb-1) {$2|w|-1$};
    \end{scope}
  \end{scope}
    \begin{scope}[xshift=28cm]
      \draw[help lines] (1,0) grid (\nb,2*\nb-1);

      \coordinate (a) at (2/3+\nb/3,1/3+2*\nb/3);
      \coordinate (b) at (1/3+2*\nb/3,-1/3+4*\nb/3);
      \coordinate (c) at (-1/3+4*\nb/3,-2/3+5*\nb/3);

    \draw[fill=incol!20,incol!20,very thin] (1,0) --  (\nb,0) -- (\nb,\nb-1) --cycle;
    \draw[fill=orange!20,orange!20,very thin] (1,1) -- (b) -- (c) -- (\nb,\nb) --cycle;
    \draw[help lines] (1,0) grid (\nb,2*\nb-1);

    \draw[incol] (1,0) -- node[anchor=north]{ $\bm{w}$} (\nb,0);
    \draw[outcol,->] (a) -- node[anchor=south east] {$\bm{f(w)}$} (b);
    \node at (\nb/2+.5,-3) {\bf (c)};

    \foreach \c in {1,...,\nb} {
      \coordinate (d\c) at (\nb/3+\c/3+1/3,2*\nb/3+2*\c/3-1/3);
      \draw [outcol,fill=outcol] (d\c)   circle (.1cm);
    }
    \foreach \c in {1,...,\demi} {
      \coordinate  (e\c) at (\c,\nb);
      \draw [outcol,fill=outcol] (e\c) circle (.1cm);
      \draw [outcol] (d\c) -- (e\c);
    }

    \begin{scope}[anchor=east,font=\scriptsize]
      \node at (1,0) {$0$};
      \node at (1,\nb) {$|w|$};
      \node at (1,2*\nb-1) {$2|w|-1$};
    \end{scope}
  \end{scope}
\end{tikzpicture}
\caption{Transfer of the first half of the output}\label{fighalf}
\end{center}
\end{figure}
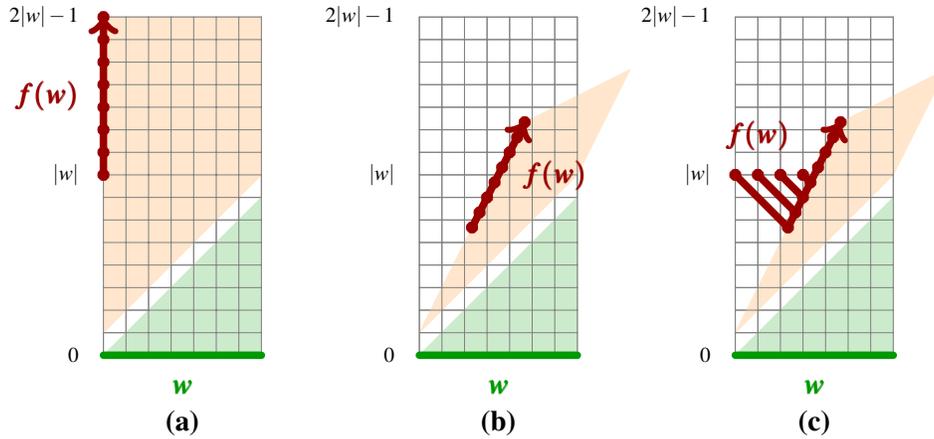

\noindent
Finally, the two stages are achieved in a symmetric way (i.e., making use of a vertical symmetry axis) to reconstitute the second half of the output.

\end{proof}

\section{Limitations}\label{sec:hier}

Achieving negative results is often a more difficult task. Here, we
shall present a result which makes explicit the impossibility of some information
transfer in cellular automata. To do this we shall prove a lack of
certain patterns in the space-time diagram.

To be more precise, if we look at a triangular extract from a
space-time diagram, we know that states inside the \emph{triangle} of
size $n$ are uniquely determined by its \emph{base} $b=b_1b_2\ldots
b_n$ . The \emph{height} of the triangle is of $h = \lceil n/2 \rceil$.

Now, let us take any arbitrary partition of the states set into two
subsets $\pi : S \to \{0,1\}$.

\begin{defn}
  A triangular extract of a space-time diagram is said \emph{uniform}
  (with respect to a partition $\pi$) if for any row of the triangle,
  all states in the row are in the same part. 
\end{defn}

We can thus associate to any uniform triangle of size $n$ a
\emph{characteristic word} $c = c_1c_2 \ldots c_{\lceil n/2 \rceil}
\in \{0,1\}^{\lceil n/2 \rceil}$ corresponding to the sequence of the
parts attached to each row (see Figure~\ref{fig:induc}).

\begin{thm}
  For any partition $\pi$, there exists a characteristic word $c \in \{0,1\}^*$ such
  that no uniform triangle with this characteristic exists. 
\end{thm}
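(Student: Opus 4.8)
The plan is to recast the statement as a counting property of the language of realizable characteristic words. Write $\mathcal{L}\subseteq\{0,1\}^*$ for the set of words that arise as the characteristic of \emph{some} uniform triangle, and let $A_\ell=|\mathcal{L}\cap\{0,1\}^\ell|$. Since there are exactly $2^\ell$ words of length $\ell$, the Theorem is \emph{equivalent} to the existence of one length $\ell$ with $A_\ell<2^\ell$: any length-$\ell$ word not lying in $\mathcal{L}$ is then a characteristic realized by no uniform triangle. So the whole task is to show that the number of realizable characteristic words is not maximal at some scale. A useful structural preliminary I would record first is that $\mathcal{L}$ is \emph{closed under factors}: if $c=c_1\cdots c_h$ is realized by a triangle with rows $R_1,\dots,R_h$ (where $R_j$ has width $n-2(j-1)$ and lies entirely in part $c_j$), then for any factor $c_i\cdots c_{i+k-1}$ I choose inside the row $R_i$ a contiguous block of $2k-1$ cells; the triangle it generates sits inside the original one, its successive rows are contiguous sub-blocks of $R_i,\dots,R_{i+k-1}$, hence each is again uniform with the same part, so the factor is itself a characteristic word. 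This tells us the forbidden set is factor-hereditary and lets us work at the most convenient length.

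The next observation is that the upward map is \emph{deterministic}: the base $R_1$ determines every row, hence the whole characteristic word, so realizability branches only through the choice of base. Thus $A_\ell$ equals the number of distinct part-traces $(\pi(R_1),\dots,\pi(R_\ell))$ emitted by uniform bases of width $2\ell-1$ whose triangle stays uniform throughout. It is worth stressing \emph{why a naive single-scale count fails}: there are $|S|^{2\ell-1}$ bases, vastly more than $2^\ell$, and many distinct bases produce the same trace, so one cannot separate characteristic words by counting bases, nor by counting triangles (already the constant rule produces hugely many triangles but only a bounded number of traces). The real content is to bound the number of distinct traces, i.e. to show that the factorial language $\mathcal{L}$ has sub-maximal complexity.

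The hard part is precisely this complexity bound, and it is where the cellular structure must be used. The obstruction is that a uniform row carries internal information (its cells share a part but may be different states), and this information governs the part of the next row, so the genuine state of the up-dynamics is the entire row, whose width is unbounded; nothing purely local forbids the trace from doubling at each step \emph{a priori}. My attack would be to exploit the rigidity of the constraint ``the width-$(w-2)$ successor of a uniform width-$w$ row is again uniform'': I expect the set of uniform rows that can be continued for $\ell$ further uniform steps, grouped by the trace they emit, to collapse to far fewer than $2^\ell$ classes. Concretely I would try either (i) a compactness/König argument — assuming $\mathcal{L}=\{0,1\}^*$, extract an infinite tower of uniform rows realizing an arbitrary part-sequence and use finiteness of $S$ to force a repetition of a row-configuration, hence a periodicity incompatible with emitting \emph{every} sequence; or (ii) a finite-invariant (light-cone/boundary) argument isolating the bounded data that actually determines each emitted part and showing it takes too few values. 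Either route yields $A_\ell<2^\ell$ for large $\ell$, contradicting $\mathcal{L}=\{0,1\}^*$ and proving the Theorem. Turning the rigidity of ``uniform successor of a uniform row'' into an honest entropy deficit is the step I expect to be the main obstacle.
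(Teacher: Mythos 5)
Your reformulation is fine as far as it goes: the realizable characteristics do form a factor-closed language $\mathcal{L}$, and the theorem is indeed equivalent to $A_\ell<2^\ell$ for some $\ell$. But the proposal stops exactly where the proof has to start. You identify the entropy deficit as ``the main obstacle'' and offer only two candidate attacks, neither carried out and neither obviously viable: rows have unbounded width, so finiteness of $S$ does not force a repeated row-configuration, and there is no finite-branching tree on which to run K\"onig's lemma. As written, this is a plan with its central step explicitly left open, not a proof.

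More tellingly, you dismiss the strategy that actually succeeds. You argue that ``one cannot separate characteristic words by counting bases'' because $|S|^{2\ell-1}\gg 2^\ell$; the paper's proof is precisely a base-counting argument, made to work by running it recursively across scales rather than at a single scale. Let $v_n$ bound the minimum, over characteristic words at scale $n$, of the number of bases of a size-$2^{n}$ triangle realizing that word (at the bottom scale, $v_1=|S|^4/4$ by pigeonhole over the $4$ characteristics of height-$2$ triangles). The key geometric fact is that the two halves of the base of a uniform triangle of size $2^{n+1}$ each generate a uniform sub-triangle of half the size whose characteristic is the \emph{same} prefix of the big triangle's characteristic. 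So if $c_n$ is a worst-case word with at most $v_n$ bases, every uniform triangle whose characteristic extends $c_n$ has its base among the $v_n^2$ concatenations $b_Ab_B$; averaging these over the $2^{2^{n-1}}$ possible extensions of $c_n$ yields an extension with at most $v_{n+1}=v_n^2/2^{2^{n-1}}$ bases. Writing $u_n=\log_2 v_n$ one gets $u_n/2^n=u_1/2-(n-1)/4\to-\infty$, so eventually $v_n<1$ and some characteristic word has no base at all. This squaring-versus-$2^{2^{n-1}}$ race at doubling scales is the missing idea; without it, or a worked-out substitute for your route (i) or (ii), the theorem is not established.
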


\begin{proof}\label{thm:impo}
  Let us pose the sequence $v_n$ defined by :
  \[ v_1 = \frac{|S|^4}{4} \textrm{ and } v_{n+1} =
  \frac{v_n^2}{2^{2^{n-1}}}\]

  To show the theorem, let us prove that for uniform triangle of size
  $2^{n+1}$, there exists a characteristic word $c_n$ of size $2^n$
  generated by at most $v_n$ bases by recurrence over $n$. As $v_n$
  converges towards $0$, this leads to the desired result.
 
  For $n=1$, a uniform triangle of height $2$ has a base of size
  $4$. There are $|S|^4$ different bases, $2^2=4$ different
  characteristics. There exists a characteristic word $c_0 \in
  \{0,1\}^2$ which has at most $|S|^4/4$ corresponding bases.

  Let us take a uniform triangle of size $2^{n+1}$ and divide it into
  four smallest ones as depicted in Figure~\ref{fig:induc}. By
  definition, $A$ and $B$ correspond to uniform triangles of size
  $2^n$ sharing the same characteristic.  By recurrence, there exists
  a characteristic word $c_n$ such that the number of different bases
  $b_A$ and $b_B$ are both less than $v_n$. Since $b_A b_B$ is the
  base of our triangle, we can deduce that any uniform triangle of
  size $2^{n+1}$ whose characteristic begins with $c_n$ must have its
  base among the $v_n^2$ elements on the form $b_Ab_B$. Since the
  number of characteristic words of size $2^n$ with prefix $c_n$ (of
  length $2^{n-1}$) is $2^{2^{n-1}}$, the average number of base per
  words is $v_{n+1}$ and at least one word has thus less or equal
  bases than $v_{n+1}$

  \begin{figure}[!htp]
    \centering
    \begin{tikzpicture}[auto,scale=.45]
      \fill[black!20] (0,0) rectangle +(14,1);
\fill[black!40] (1,1) rectangle +(12,1);
\fill[black!20] (2,2) rectangle +(10,1);
\fill[black!15] (3,3) rectangle +(8,1);
\fill[black!40] (4,4) rectangle +(6,1);
\fill[black!20] (5,5) rectangle +(4,1);
\fill[black!15] (6,6) rectangle +(2,1);

\foreach \i in {0,1,...,6}
  \draw (\i,\i) grid +(14-\i-\i,1);

\node at (.5,.5) {$b_1$};
\node at (1.5,.5) {$b_2$};
\node at (2.5,.5) {$b_3$};
\draw[dotted] (3.5,.5) -- (11.5,.5);
\node at (12.5,.5) {$b_{13}$};
\node at (13.5,.5) {$b_{14}$};

\node at (14.5,.5) {$c_1$};
\node at (14.5,1.5) {$c_2$};
\draw[dotted] (14.5,2.5) -- (14.5,4.5);
\node at (14.5,5.5) {$c_6$};
\node at (14.5,6.5) {$c_7$};
    \end{tikzpicture} \hspace{1.5cm} %
    \begin{tikzpicture}[scale=.45,auto]
      \fill[black!10] (1,1) -- (1.5,1.5) -- (10.5,1.5) -- (11,1);
\fill[black!15] (2,2) -- (2.5,2.5) -- (9.5,2.5) -- (10,2);
\fill[black!10] (2.5,2.5) -- (3,3) -- (9,3) -- (9.5,2.5);
\fill[black!15] (3.5,3.5) -- (4.5,4.5) -- (7.5,4.5) -- (8.5,3.5);
\fill[black!10] (4.5,4.5) -- (5,5) -- (7,5) -- (7.5,4.5);

\draw[thick] (0,0) -- (6,6) -- (12,0) -- cycle;

\draw (3,3) -- (9,3) -- (6,0) -- cycle;

\draw[blue,thick,dotted] (0,0) -- (3,3) -- (6,0) -- cycle;
\draw[blue,thick,dotted] (6,0) -- (9,3) -- (12,0) -- cycle;

\node[blue] at (3,1.5) {$\bm{A}$};
\node[blue] at (9,1.5) {$\bm{B}$};

\draw[decorate,decoration=brace] (5.9,-.5) -- node[anchor=north, text centered]{\bm{$b_A$}} (0,-.5);
\draw[decorate,decoration=brace] (12,-.5) -- node[anchor=north, text centered]{\bm{$b_B$}} (6.1,-.5);

\draw[decorate,decoration=brace] (12.5,3) -- node[anchor=west]{$c_n$} (12.5,0);
\draw[decorate,decoration=brace] (13.5,6) -- node[anchor=west]{$c_{n+1}$} (13.5,0);
    \end{tikzpicture}
    
    \caption{Induction over uniform triangle}
    \label{fig:induc}
  \end{figure}
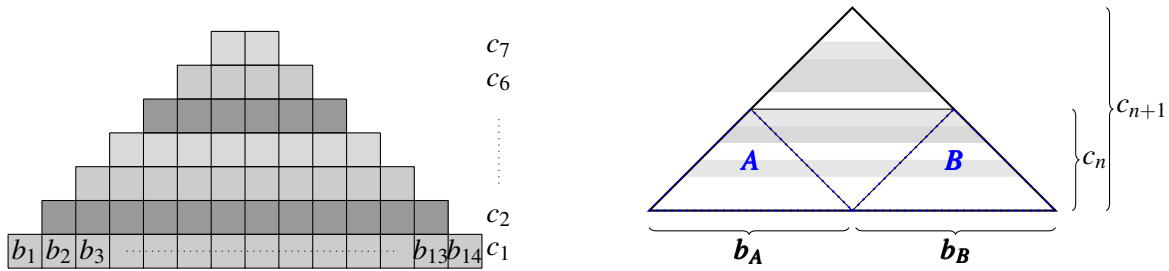
  
\end{proof}

Using this general result, we can prove that the mirror is not
computable in strict real time.

\begin{prop}
  Mirror is not computable is strict real time.
\end{prop}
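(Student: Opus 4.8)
The plan is to argue by contradiction and reduce to Theorem~\ref{thm:impo}. Suppose the mirror were computable in strict real time by some cellular automaton $(S,f)$ together with a projection $\pi$. Since the statement is already hard for a two-letter alphabet, I would restrict to $|I|=2$, so that $\pi:S\to\{0,1\}$ is literally a partition of the state set; it is exactly this $\pi$ to which I would apply Theorem~\ref{thm:impo}. The first thing to record is the geometry of the output. On an input $w$ of length $n$, the output sites form a ``V'': by definition the left arm consists of the sites $\langle i,n-i\rangle$ for $0\le i\le (n-1)/2$, which lie on the antidiagonal $c+t=n$, and the right arm consists of the sites $\langle i,i+1\rangle$, lying on $t=c+1$. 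Along the left arm the strict-real-time condition gives $\pi(\langle i,n-i\rangle)=\phi(w)_i=w_{n-1-i}$; writing $w=uv$ with $|u|=|v|=n/2$, the sequence of $\pi$-values read along this antidiagonal is exactly the right half $v$ reversed. Hence the colours along this line may be prescribed arbitrarily by the choice of $v$, and — crucially — this antidiagonal is precisely the right side of an up-pointing triangle of the kind handled by Theorem~\ref{thm:impo}.

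The heart of the argument is then to convert this freedom into uniform triangles of \emph{every} characteristic word, which contradicts the conclusion of Theorem~\ref{thm:impo} that for the partition $\pi$ at least one characteristic word is realised by no uniform triangle at all. Concretely, for a target characteristic word $c=c_1\cdots c_h$ I would seek an input whose space-time diagram contains an up-pointing triangle having its right side along the controlled antidiagonal $c+t=n$, and such that the $j$-th row of the triangle lies entirely in the part $c_j$. The right endpoint of each row is a controlled output cell carrying a freely chosen bit $w_{n-1-i}$, which I would use to pin the colour of that row; the remaining input bits would then have to be chosen so as to drive the rest of the row into the same part. If this can be arranged for an arbitrary $c$, then in particular the forbidden word produced by Theorem~\ref{thm:impo} is realised, and the contradiction is immediate.

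The hard part will be exactly this last point, and I expect it to be the main obstacle. Strict real time pins only \emph{one} cell per row — the two arms of the V meet each horizontal line in at most two points, namely $i=n-t$ and $i=t-1$ — whereas a uniform triangle demands that a whole horizontal segment be monochromatic under $\pi$. Bridging ``one controlled boundary cell per row'' to ``an entire monochromatic row'' is where the genuine difficulty lies, since the interior cells of a row are not constrained by the output convention. The natural way to attack it is to reuse the self-similar subdivision from the proof of Theorem~\ref{thm:impo}: split a size-$2^{n+1}$ triangle into four copies, exploit that the two bottom sub-triangles share the same rows, and propagate the prescribed colours inward from the controlled side level by level, arguing that the forced boundary values together with a suitable padding of the free bits leave no room for a row to straddle both parts. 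Once every characteristic word is obtained in this fashion, it contradicts Theorem~\ref{thm:impo}, and therefore the mirror is not computable in strict real time.
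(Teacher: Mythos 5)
You have correctly located the difficulty, but you have not resolved it, and the plan you sketch for the final step does not work. A single run on the full input $w$ forces $\pi$ only on the two arms of the V, that is, on at most two sites per horizontal row; the interior sites of a row are completely unconstrained by the output convention, and no choice of the ``free'' input letters guarantees that they fall into the prescribed part --- the CA could perfectly well alternate parts inside every row while still computing the mirror. Re-using the four-triangle subdivision from Theorem~\ref{thm:impo} cannot bridge this either: that subdivision is a counting device applied to triangles that are \emph{already assumed} uniform; it provides no mechanism for forcing uniformity of rows.

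The missing idea, which is the whole content of the paper's proof, is to exploit all prefixes of the input simultaneously rather than the single antidiagonal $c+t=n$. Since the border symbol $\#$ cannot influence any site $\langle c,t\rangle$ with $c+t\le m$, one can add to the CA the classical extra layer in which the site $\langle c,t\rangle$ carries the state it would have if the input were the prefix $w_0\cdots w_{c+t-1}$. On that layer, $\langle c,t\rangle$ is an output site on the left arm of the V for the prefix of length $m=c+t$ as soon as $c\le (m-1)/2$, i.e. $c\le t-1$, and the strict real time condition then forces $\pi(\langle c,t\rangle)=w_{m-1-c}=w_{t-1}$ --- a value that depends only on the row index $t$ and is freely prescribable through $w$. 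Hence every horizontal row of the region $\{\langle c,t\rangle \colon 0\le c\le t-1,\ c+t\le n\}$ is monochromatic of a chosen colour, which immediately yields uniform triangles of every characteristic and the desired contradiction. Without this prefix layer your argument stops exactly at the point you flagged.
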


\begin{proof}
  By contradiction, assume that there exists a CA computing such a
  solution. Such CA  works over each prefix of the input as depicted in
  the first row of Figure~\ref{fig:trian-mirror} and output the result
  at the specified position.

  Since the $\#$ symbol cannot influence any cell under the diagonal, 
  the computation in the lower left triangle is the same for all
  cases. Therefore, if we use the classical trick of adding to each
  cell an additional  layer which makes computation as if the input
  was finished, the resulting cellular automaton can do the partial
  computation indicated on the lower row of the figure.
  
  Taking the projection $\pi$ used for outputting the result, we can
  see that we could construct uniform triangles with any arbitrary
  characteristics contradicting the previous theorem.

  \begin{figure}[!htp]
    \centering
    \begin{tikzpicture}[scale=.3,font=\scriptsize,anchor=center]
  
  \begin{scope}
   \def\nb{10}
    \def\demid{6}

    \fill[yellow!25] (1,0) -- (7,0) -- (1,6) -- cycle; % GR

    \foreach \c in {1,...,\nb}
    \coordinate[label=below:$x_{\c}$]  (e\c) at (\c,0);

    \foreach \c in {\demid,...,\nb} {
      \coordinate[label=above:$x_{\c}$] (s\c) at (\nb-\c+1,\c);
      \coordinate (g\c) at (\demid-\c/2-.5,3*\c/2-\demid+.5);
      \coordinate (d\c) at (\demid+\c/2-.5,\demid-\c/2-.5);
    }

    \draw[azul,ultra thick] (e7) -- (g7) -- (s7); % -- (d7) -- cycle;

    \draw[lightgray] (1,0) grid (\nb,\nb);
    \draw[verde,ultra thick] (1,0) -- (\nb,0);
    \draw [rojo,very thick] (1,1) -- (\nb,\nb) (\nb,1) -- (1,\nb);

    \foreach \c in {\demid,...,\nb} 
    \draw[azul] (e\c) -- (g\c) -- (s\c);
    \node at (0,0) {$\sharp$}; \node at (\nb+1,0) {$\sharp$};
  \end{scope}

\begin{scope}[xshift=13cm]
   \def\nb{9}
   \def\demid{5.5}
   \def\cem{6}

   \fill[yellow!25] (1,0) -- (7,0) -- (1,6) -- cycle; % GR

    \foreach \c in {1,...,\nb}
    \coordinate[label=below:$x_{\c}$]  (e\c) at (\c,0);

    \foreach \c in {\cem,...,\nb} {
      \coordinate[label=above:$x_{\c}$] (s\c) at (\nb-\c+1,\c);
      \coordinate (g\c) at (\demid-\c/2-.5,3*\c/2-\demid+.5);
      \coordinate (d\c) at (\demid+\c/2-.5,\demid-\c/2-.5);
    }

    \draw[azul,ultra thick] (e7) -- (g7) -- (s7); % -- (d7) -- cycle;

    \draw[lightgray] (1,0) grid (\nb,\nb);
    \draw[verde,ultra thick] (1,0) -- (\nb,0);
    \draw [rojo,very thick] (1,1) -- (\nb,\nb) (\nb,1) -- (1,\nb);

    \foreach \c in {\cem,...,\nb} 
    \draw[azul] (e\c) -- (g\c) -- (s\c);
    \node at (0,0) {$\sharp$}; \node at (\nb+1,0) {$\sharp$};
  \end{scope}
\begin{scope}[xshift=25cm]
   \def\nb{8}
     \def\demid{5}

     \fill[yellow!25] (1,0) -- (7,0) -- (1,6) -- cycle; % GR

    \foreach \c in {1,...,\nb}
    \coordinate[label=below:$x_{\c}$]  (e\c) at (\c,0);

    \foreach \c in {\demid,...,\nb} {
      \coordinate[label=above:$x_{\c}$] (s\c) at (\nb-\c+1,\c);
      \coordinate (g\c) at (\demid-\c/2-.5,3*\c/2-\demid+.5);
      \coordinate (d\c) at (\demid+\c/2-.5,\demid-\c/2-.5);
    }

    \draw[azul,ultra thick] (e7) -- (g7) -- (s7); % -- (d7) -- cycle;

    \draw[lightgray] (1,0) grid (\nb,\nb);
    \draw[verde,ultra thick] (1,0) -- (\nb,0);
    \draw [rojo,very thick] (1,1) -- (\nb,\nb) (\nb,1) -- (1,\nb);

    \foreach \c in {\demid,...,\nb} 
    \draw[azul] (e\c) -- (g\c) -- (s\c);
    \node at (0,0) {$\sharp$}; \node at (\nb+1,0) {$\sharp$};
  \end{scope}
  \begin{scope}[xshift=36cm]
   \def\nb{7}
   \def\demid{4.5}
   \def\cem{5}

   \fill[yellow!25] (1,0) -- (7,0) -- (1,6) -- cycle; % GR

    \foreach \c in {1,...,\nb}
    \coordinate[label=below:$x_{\c}$]  (e\c) at (\c,0);

    \foreach \c in {\cem,...,\nb} {
      \coordinate[label=above:$x_{\c}$] (s\c) at (\nb-\c+1,\c);
      \coordinate (g\c) at (\demid-\c/2-.5,3*\c/2-\demid+.5);
      \coordinate (d\c) at (\demid+\c/2-.5,\demid-\c/2-.5);
    }

    \draw[azul,ultra thick] (e7) -- (g7) -- (s7); % -- (d7) -- cycle;

    \draw[lightgray] (1,0) grid (\nb,\nb);
    \draw[verde,ultra thick] (1,0) -- (\nb,0);
    \draw [rojo,very thick] (1,1) -- (\nb,\nb) (\nb,1) -- (1,\nb);

    \foreach \c in {\cem,...,\nb} 
    \draw[azul] (e\c) -- (g\c) -- (s\c);
    \node at (0,0) {$\sharp$}; \node at (\nb+1,0) {$\sharp$};
  \end{scope}

\end{tikzpicture}
    
    \begin{tikzpicture}[scale=.5,font=\footnotesize,anchor=center]
  \def\nb{10}

  \fill[yellow!25] (1,0) -- (7,0) -- (1,6) -- cycle; % GR

    \foreach \c in {1,...,\nb}
    \coordinate[label=below:$x_{\c}$]  (e\c) at (\c,0);
    \node at (0,0) {$\sharp$}; \node at (\nb+1,0) {$\sharp$};
    \draw[help lines] (1,0) grid (\nb,\nb);
    \draw[verde,ultra thick] (1,0) -- (\nb,0);
    \draw [rojo,very thick] (1,1) -- (\nb,\nb) (\nb,1) -- (1,\nb);
    
    \foreach \i/\cem in {2/2,3/3,4/3,5/4,6/4,7/5,8/5,9/6,10/6} {
    \pgfmathparse{\i/2} \let\demid\pgfmathresult
    \foreach \c in {\cem,...,\i} {
      \coordinate (s\c) at (\i-\c+1,\c);
      \node[above,fill=white,inner sep=1pt] at (s\c) {$x_{\c}$}; 
      \coordinate (g\c) at (\demid-\c/2+.5,3*\c/2-\demid-.5);
    }
  }

  \foreach \i in {1,...,4}{
  \node[above,fill=white,inner sep=1pt] at (\i,7) {\bm{$x_7$}};
}
\end{tikzpicture}
    
    \caption{Triangles appearing in the space-time diagram of the
      mirror computation}
    \label{fig:trian-mirror}
\end{figure}
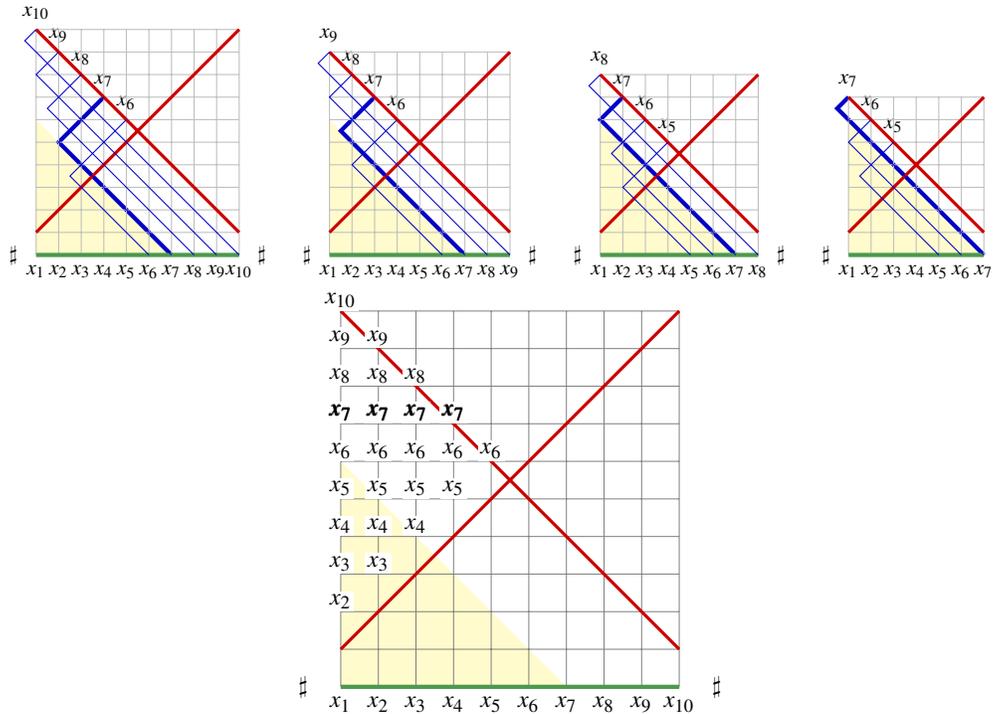

\end{proof}

This last result allows us to state a separation result between strict
and synchronized real time. Note that there exist other ways to
achieve such a result (using the separation of real time recognition
with central output or left output for example).

\begin{cor}
  Strict real time is strictly included in synchronous real time.
\end{cor}

As mirror can be achieved by composition of the two functions $f$ and
$g$ given in Example~\ref{ex:fg} which both are computable in strict real
time, it provides an explicit counter-example for the stability of this
class under composition.

\begin{cor}
  Strict real time is not closed under composition.
\end{cor}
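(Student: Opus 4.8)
The plan is to produce an explicit counterexample rather than argue abstractly, since all the heavy machinery is already in place. First I would recall from Example~\ref{ex:fg} that the two functions $f$ and $g$ are each computable in strict real time, and that---as observed there---their composition is precisely the mirror function. Concretely, writing an input of even length as a concatenation $AB$ of its two halves, $f$ swaps the halves to produce $BA$ while $g$ reverses each half in place; composing them (in either order) yields $B^R A^R = (AB)^R$, the mirror.

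Next I would invoke the immediately preceding proposition, which establishes, via Theorem~\ref{thm:impo}, that the mirror is not computable in strict real time. With these two ingredients the argument reduces to a one-line contradiction: if the strict real time class were closed under composition, then since both $f$ and $g$ belong to it, so would their composition, namely the mirror---contradicting the proposition. Hence the class cannot be closed under composition.

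The main obstacle is essentially nonexistent at the level of the corollary, because the genuine difficulty has already been discharged upstream: the combinatorial impossibility result (Theorem~\ref{thm:impo}) and its transfer to the mirror computation carry all the weight. The only point requiring care is bookkeeping---verifying that the particular decomposition of the mirror as a composition of $f$ and $g$ matches the orientation and half-splitting conventions fixed in Example~\ref{ex:fg}, so that the counterexample is genuine and not merely formal. Once that is checked, the statement follows immediately.
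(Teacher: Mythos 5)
Your proposal is correct and follows exactly the paper's argument: the paper likewise notes that the mirror is the composition of the strict-real-time functions $f$ and $g$ from Example~\ref{ex:fg}, and combines this with the preceding proposition that the mirror is not computable in strict real time. Nothing is missing.
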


\section{Conclusion} % and perspectives}
 
The parallel functional classes provide an  interesting perspective %variant
to study the question of small time complexities over cellular automata. 

The
three defined classes: strict real time, synchronous real time and linear time
form a chain of inclusions and all contain ``natural'' %parallel
problems.
Not surprisingly, the functional classes are strongly linked with the recognition classes. In particular open questions about proper inclusions are correlated.
% and even results about separation between real and linear are
% equivalent.

Moreover, the functional approach brings us new
solvable problems for which solutions are interesting outside
the strict range of functional classes (as the generic method used in
several algorithms or the result over restriction on uniform triangles
in space-time diagram).

% Taking a functional approach may be one useful idea to bring some new
% fresh vision and problem which may help us to better understand the
% problem of massively parallel computation.

% \begin{figure}[!htp]
%   \centering
%   \todo[inline]{the big picture}
%   \caption{Functional classes}
%   \label{fig:big-pict}
% \end{figure}

%\section*{References}

%\bibliographystyle{eptcs} 
%\bibliography{Class_Fun.bib}

\end{document}